\journal{Journal}
\begin{document}

\title{Geometric Tracking Control of a Multi-rotor UAV for Partially Known Trajectories}

\author[1]{Yogesh Kumar}

\author[1]{S.B. Roy}

\author[2]{P.B. Sujit}

\authormark{Yogesh \textsc{et al.}}
\titlemark{Geometric Tracking Control of a Multi-rotor UAV for Partially Known Trajectories}

\address[1]{\orgdiv{ Department of Electronics and Communication Engineering}, \orgname{Indraprastha Institute of Information Technology Delhi}, \orgaddress{\state{Delhi}, \country{India}}}

\address[2]{\orgdiv{Department of Electrical Engineering and Computer Science}, \orgname{Indian Institute of Science Education and Research Bhopal}, \orgaddress{\state{Madhya Pradesh}, \country{India}}}

\corres{Corresponding author Yogesh Kumar, \email{yogeshk@iiitd.ac.in}}


\abstract[Abstract]{This paper presents a trajectory-tracking controller for multi-rotor unmanned aerial vehicles (UAVs) in scenarios where only the desired position and heading are known without the higher-order derivatives. The proposed solution modifies the state-of-the-art geometric controller, effectively addressing challenges related to the non-existence of the desired attitude and ensuring positive total thrust input for all time. We tackle the additional challenge of the non-availability of the higher derivatives of the trajectory by introducing novel nonlinear filter structures. We formalize theoretically the effect of these filter structures on the system error dynamics. Subsequently, through a rigorous theoretical analysis, we demonstrate that the proposed controller leads to uniformly ultimately bounded system error dynamics.}

\keywords{Nonlinear control, geometric control, multi-rotor UAV, quadrotors, nonlinear filters on $\mathrm{SO(3)}$.}

\jnlcitation{\cname{%
\author{Kumar Y},
\author{Roy SB}, and
\author{Sujit PB}}.
\ctitle{Geometric Tracking Control of a Multi-rotor UAV for Partially Known Trajectories.} \cjournal{\it Int J Robust Nonlinear Control.} \cvol{2023;xx(xx):x--xx}.}

\maketitle


\renewcommand\thefootnote{\fnsymbol{footnote}}
\setcounter{footnote}{1}

\section{Introduction} \label{sec:Intro}
Multi-rotor unmanned aerial vehicles (UAVs) are becoming a commonplace operational necessity due to their dynamic capabilities, such as vertical take-off, landing, and hovering. Over the years, many developments in multi-rotor controller synthesis have occurred, especially for quadrotors considering various factors, including implementation complexity, required performance, and available computational resources \cite{UAV_Survey}. The commonly used multi-rotor UAVs are underactuated with highly non-linear and coupled dynamics, which demands special consideration when designing globally stable trajectory tracking controller \cite{Multirotor_Basic}.

While hardware and software advances are shrinking the gaps between theory and practice, the ever-increasing application domain introduces new challenges. One such challenge is to solve a tracking control problem where the desired trajectory is not fully known, especially for cases that require non-cooperative target tracking or coordination with other vehicles \cite{PBVS_Vijay,Tenghuchang_Vis}. Due to hardware or security considerations, obtaining complete information about the motion of the other vehicle other than the pose measurement is often challenging \cite{PBVS_Yogesh}. The non-cooperative tracking problem is usually tackled using the augmentation of vision sensors, such as - position-based visual servoing (PBVS) \cite{Vision_motivation,PBVS_Vijay,PBVS_Switching} or image-based visual servoing (IBVS) \cite{Teng2, RITA_OpticalFlow, VTOL_Automatica_2019, VTOL_Automatica_CDC}.

Compensating for unknown target dynamics 
is a challenging issue that is often tackled by either introducing a robust conservative component in the controller or deriving feedback compensation for the target velocity based on ad-hoc techniques such as optical flow \cite{RITA_OpticalFlow, VTOL_Automatica_CDC,Mahony_optical} and Kalman filter \cite{PBVS_Vijay,sanchez2014approach_kalman, Teng2}, etc. while considering restrictive assumptions on the target motion. In our earlier work \cite{IBVS_Yogesh}, we proposed an IBVS method for quadrotors to overcome these challenges at the kinematic level. However, the internal attitude and translational controllers further require higher derivatives of the target dynamics. The knowledge of higher derivatives of target dynamics is thus necessary, and solutions such as high controller gain \cite{VTOL_Automatica_2019} or command filter-based \cite{Command_Filter} techniques \cite{Robust_IBVS_Landing} are proposed in the literature. Recently, the authors in \cite{Robust_Khalil} introduced a high gain controller based on the Euler angle formulation to address the requirement of higher derivatives of the target dynamics. However, Euler angle-based methods show singularities \cite{Geometric_2010}, which restricts the UAV performance for nontrivial trajectories. 

This paper proposes a solution to the general problem of designing a trajectory-tracking controller for the multi-rotor UAV when only the desired position and heading are available. We will design our controller by suitably modifying the state-of-the-art geometric controller proposed in \cite{Geometric_2010,Exponential_Geometric}. Developed on the special Euclidean group $SE(3)$, the geometric controller enables us to design an almost global trajectory tracking controller for multi-rotor UAVs. However, two major limitations of this controller are (i) the thrust is not saturated, hence no guarantee of the existence of the desired angular orientation, and (ii) the designed thrust can be negative when the angle between the third body axis of a multi-rotor UAV and desired thrust direction is greater than $90^\circ$ \cite{Disturbance_Kar}. Our proposed solution addresses these limitations in the current context of designing the controller for partially known trajectories. 

The main contributions of the paper are as follows 
\begin{itemize}
    \item  We consider the problem of multi-rotor tracking a partially known trajectory and develop a trajectory-tracking controller by suitably modifying the state-of-the-art geometric controller \cite{Geometric_2010}.

    \item Unlike \cite{NotSO(3)}, an auxiliary filter dynamics in $SO(3)$ space is proposed for the UAV attitude system, 
    which tackle the requirement of higher-order derivatives of the desired position in desired attitude derivatives. A similar approach using command filter is presented in \cite{Quaternion_ACC}; however, considering a quaternion-based formulation. To the best of our knowledge, this is the first time the consequences of such filter dynamics for attitude representations in $SO(3)$ are quantified on the overall system through a Lyapunov-based analysis.
    
    \item To ensure the non-zero desired thrust requirements for the existence of the desired attitude \cite{VTOL_Automatica_2019, Ranjan_ACC}, we propose a projection-based filter dynamics to yield a bounded second-order derivative of the desired position. Further, we discuss the effect of large attitude errors on the thrust control inputs proposed in \cite{Geometric_2010, Disturbance_Kar} in detail and introduce a novel control strategy.  
    
    \item The effect of the proposed control scheme, along with the auxiliary filter dynamics, is theoretically analyzed. The overall system consisting of translational, angular, and auxiliary dynamics is shown to be uniformly ultimately bounded through rigorous mathematical arguments. The careful design and extensive theoretical analysis enable us with an ultimate bound that can be made arbitrarily small by choosing high enough auxiliary dynamics gains without burdening the controller gains. 
\end{itemize}

The rest of the paper is organized as follows. Section \ref{Sec:Formulation} describes the problem formulation with the control objective. The controller design is presented in Section \ref{Sec:Control} followed by the theoretical analysis in Section \ref{Sec:Analysis}. Finally, section \ref{Sec:conclusion} draws the conclusions.
 
\section{Problem Formulation and Preliminaries} \label{Sec:Formulation}
\subsection{Notations}
The set of real numbers is represented as $\mathbb{R}$, while the set of positive real numbers as $\mathbb{R}_{+}$. The identity matrix of size $3$ is represented as $I$. Operators $\lambda_{max}(.)$ and $\lambda_{min}(.)$ represents the maximum and minimum eigenvalue of input matrix, respectively. The special orthogonal group $\mathrm{SO(3)}$ which contains the orthogonal matrices is defined as 
 \begin{align} \label{eq:orthogonal}
 \mathrm{SO(3)}  = \{R \in \mathbb{R}^{3 \times 3} | R^TR = RR^T =  I, det(R) = 1\}.
 \end{align}
 
 The group of skew-symmetric matrices $\mathfrak{so}(3)$ is defined as 
 \begin{align} \label{eq:skew}
  \mathfrak{so}(3)  = \{S \in \mathbb{R}^{3 \times 3} | S = -S^T\}.
 \end{align}
 Map $(\hat{.}):\mathbb{R}^3 \rightarrow \mathfrak{so}(3)$ is the skew-symmetric map whereas it's inverse is denoted by $(.)^\vee : \mathfrak{so}(3) \rightarrow \mathbb{R}^3$ and $e_3 = [0,0,1]^T$. For a vector $v$, $\|v\|$ denotes the 2-norm whereas for a matrix $M$, $\|M\|$ denotes the 2-norm until unless specified. $\ln()$ represents the natural logarithmic operator. For a vector $v = [v_1,v_2,v_3]^T$,
\begin{align}
    \tanh(v) = &[\tanh(v_1),\tanh(v_2),\tanh(v_3)]^T \in \mathbb{R}^3, \label{eq:tanh} \\
    Cosh(v) = &daig\{\cosh(v_1),\cosh(v_2),\cosh(v_3)\} \in \mathbb{R}^{3 \times 3}, \label{eq:Cosh} \\
    Tanh(v) = &daig\{\tanh(v_1),\tanh(v_2),\tanh(v_3)\} \in \mathbb{R}^{3 \times 3}. \label{eq:Tanh} \\
    Sech(v) = &Cosh^{-1}(v). \label{eq:Sech}
\end{align}

\subsection{Multi-rotor Dynamics}  \label{sec:Prelim}
 Consider a multi-rotor UAV illustrated in Figure \ref{fig:1}. Let \{I\} and \{B\} be the inertial and the vehicle body frames, with origins $O_I$ and $O_B$, respectively. The configuration of UAV is defined by the position $x \in \mathbb{R}^3$ of the center of mass and attitude $R = [{x}_B, {y}_B, {z}_B] \in \mathrm{SO(3)}$ in the inertial frame. 
 
We model the multi-rotor dynamics as
\begin{align} 
    \dot{x} = &v, \label{eq:dyn1} \\ 
    m\dot{v} = &mge_3 - fRe_3, \label{eq:dyn2} \\ 
    \dot{R} = &R\hat{\Omega}, \label{eq:dyn3} \\ 
    J\dot{\Omega} =  & -\Omega \times J\Omega + M, \label{eq:dyn4}
\end{align}
where inertial frame position $x(t) \in \mathbb{R}^3$, inertial frame velocity $v(t) \in \mathbb{R}^3$, the rotation matrix from \{B\}  to \{I\}, $R(t) \in SO(3)$ and the body-fixed angular velocity $\Omega(t) \in \mathbb{R}^3$ represents the multi-rotor states. The total thrust $f \in \mathbb{R}$ and moment $M=[M_1, M_2, M_3]^T\in \mathbb{R}^3$ expressed in body frame, are considered as the control inputs to the multi-rotor model. For a given pair of total thrust and moment, the rotor speeds can be calculated using the following relation 
\begin{align} 
    \begin{bmatrix}\label{eq:dyn5}
    f \\
    M
    \end{bmatrix} = \Gamma \omega_T, \ \omega_T = [\omega_1^2, ..., \omega_n^2]^T, 
\end{align}
where n is the number of rotors and $\Gamma \in \mathbb{R}^{4 \times n}$ is a constant mixing matrix constructed based on the rotor configuration parameters such as placement of rotors, thrust, and moment coefficient of rotors etc \cite{Multirotor_Basic}.  
 
 \begin{figure}[h!]
    \centering
    \includegraphics[width=9cm,height=7cm]{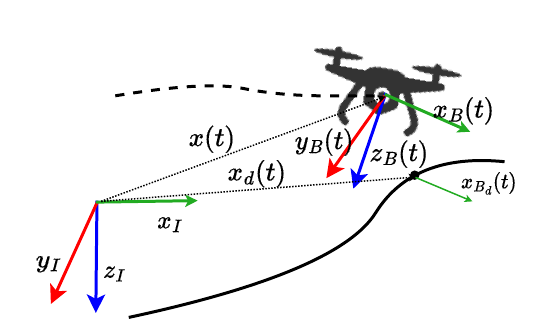} 
    \caption{Coordinate frames illustration. The $x$, $y$, $z$ axes of
a coordinate frame \{A\} are denoted by ${x}_A$, ${y}_A$ and ${z}_A$, respectively.}
    \label{fig:1}
\end{figure}
 
\subsection{Control Objective}
The objective is to design a nonlinear trajectory tracking controller for a multi-rotor UAV to track a position reference $x_d(t)$ and a heading direction $x_{B_d}(t)$ or a heading angle $\psi_d(t)$, where $x_{B_d}(t) = [cos\psi_d(t),sin\psi_d(t),0]^T$. Moreover, we consider that the higher derivatives of the position reference and heading direction ( i.e. $\frac{d^nx_d(t)}{dt^n} \ \text{and} \ \frac{d^nx_{B_d}(t)}{dt^n}, \ n \geq 1$) are not available, which are essential for an accurate and efficient trajectory tracking controller.  

\begin{assumption}\label{assump1}
We assume that the reference trajectory has bounded derivatives as  
\begin{align} \label{eq:dyn6}
    \sup_{t \geq 0}||\frac{d^n{x}_d(t)}{dt^n}|| = {h}_n, \quad n = 1,2,3,4, \quad
    \sup_{t \geq 0}||\frac{dx_{B_d}(t)}{dt}|| = {h}_5, \ \ \ \sup_{t \geq 0}||\frac{d^2x_{B_d}(t)}{dt^2}|| = {h}_6 .
\end{align}
where $h_1, h_2, h_3, h_4, h_5, h_6 \in \mathbb{R}_{+}$ are finite constants.
\end{assumption}

\section{Control Design} \label{Sec:Control}
We follow a multistage procedure illustrated in Figure \ref{fig:2} to design the trajectory tracking controller. We first design the position tracking controller to obtain the control input $f(t)$. Subsequently, a desired rotation matrix $R_c(t) \in \mathrm{SO(3)}$ constructed exploiting unit vector direction of $f$ and desired heading direction $x_{B_d}(t)$. We then pass $R_c(t)$ through an auxiliary filter dynamics to obtain the filtered desired attitude reference, which the proposed attitude tracking controller would track.   

 \begin{figure}[h!]
    \centering
    \includegraphics[width=10cm,height=6cm]{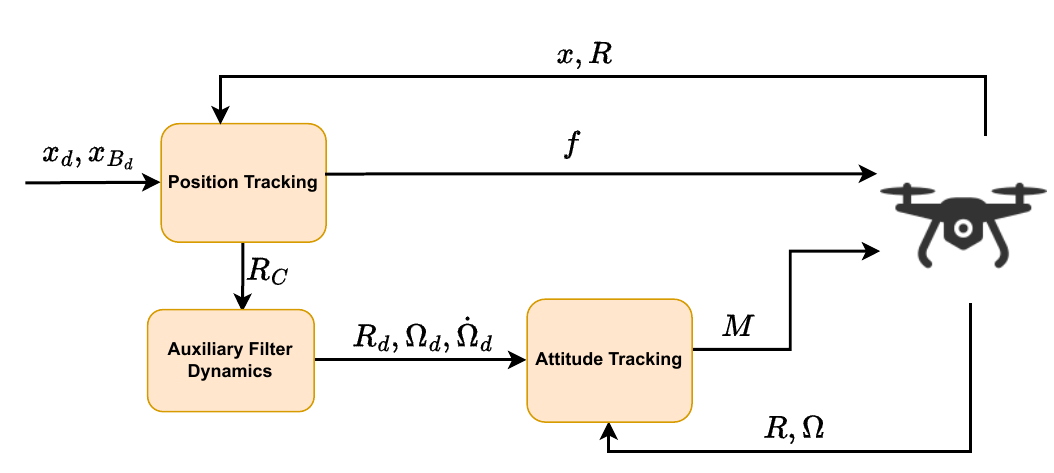}
    \caption{Control architecture}
    \label{fig:2}
\end{figure}

\subsection{Position Tracking Controller}
To design a smooth, saturated position tracking controller, we consider an error system \cite{Ranjan_ACC} with position tracking error $e_x(t)$ and filter tracking error $e_\alpha(t)$ as 
\begin{align}
    &e_x = x - x_d = [e_{x_1},e_{x_2},e_{x_3}]^T, \label{eq:pos1}\\
    &e_{\alpha} = \dot{e}_x + \alpha_{x} \tanh(e_x) + \tanh(e_f), \label{eq:pos2}
\end{align}
where $e_f(t) = [e_{f_x}(t),e_{f_y}(t),e_{f_z}(t)]^T$ is an auxiliary filter variable with the dynamics  
\begin{align}
    &\dot{e}_{f} = Cosh^2(e_f)\left(-k_{\alpha}e_{\alpha} + \tanh(e_x) - \alpha_{f}\tanh(e_f)\right), \label{eq:pos3}
\end{align}
and $\{k_{\alpha},\alpha_f, \alpha_x \in \mathbb{R}_{+}\}$ are positive constant control design parameters. 

The thrust control input $f(t)$ is designed as 
\begin{align}\label{eq:pos4}
f = \|f_d\|\sqrt{\frac{1 + e_3^TR_c^TRe_3}{2}}
\end{align}
where $f_d(t)$ is a virtual control input defined as 
\begin{align}
    &f_d = mge_3 - m{g}_{1} + 2m\tanh(e_x) - mk_{\alpha}\tanh(e_f), \label{eq:pos5}
\end{align}
and $g_1(t)$ is an estimate of $\ddot{x}_d(t)$ which is not directly available and $R_c$ is the desired attitude matrix carefully constructed in the subsequent section. Note that the thrust expression requires term $\dot{e}_x$, i.e., velocity feedback, in the $e_f$. Since we don't have the explicit knowledge of $\dot{x}_d$, one can circumvent the requirement of unknown term $\dot{e}_x$ in thrust input calculation following the output-feedback formulation in \cite{Output_Dixon}. This also makes the position dynamics loop output feedback.

\subsection{Estimate of $\ddot{x}_d(t)$}
To obtain the smooth and bounded estimate $g_1(t)$ of $\ddot{x}_d(t)$, we start by considering the following filter structure with filter input $x_d(t)$ and output $x_{f_d}(t)$
\begin{align}
    &\gamma_1\dot{x}_{f_d} + x_{f_d} = x_d, \ \ x_{f_d}(0) = 0. \label{eq:pos6}
\end{align}
Consider another filter with filter design parameter $\gamma_1 \in \mathbb{R}_{+}$ to filter the unknown variable $\dot{x}_d(t)$ as  
 \begin{align}
\gamma_1 \dot{g}_{x_d} + {g}_{x_d} = \dot{x}_d, \ \  {g}_{x_d}(0) = 0. \label{eq:pos7}
\end{align}
Integrating \eqref{eq:pos7} for ${g}_{x_d}(t)$ yield 
 \begin{align}
 {g}_{x_d}(t) = \frac{1}{\gamma_1}\left(x_d(t)-x_{f_d}(t)-exp(-\frac{1}{\gamma_1}t)x_d(0)\right). \label{eq:pos8}
\end{align}
Since ${g}_{x_d}(t)$ represents the filtered output of unknown variable $\dot{x}_d(t)$, we further pass it through a filter with design parameter $\gamma_2 \in \mathbb{R}_{+}$ as 
\begin{align}
    &\gamma_2\dot{g}_{f_d} + g_{f_d} = g_{x_d}, \ \ g_{f_d}(0) = 0, \label{eq:pos9}
\end{align}
to obtain the estimate of $\ddot{x}_d(t)$. Further following the filter structure similar to \eqref{eq:pos7} for the unknown variable $\dot{g}_{x_d}$ and integrating it for its filter variable ${g}_{v_d}(t)$, we get
\begin{align}\label{eq:pos10}
{g}_{v_d}(t) = \frac{1}{\gamma_2}\left(g_{x_d}(t)-g_{f_d}(t)-exp(-\frac{1}{\gamma_2}t)g_{x_d}(0)\right).    
\end{align}

For clarity, the filter structures discussed above are illustrated in Figure \ref{fig:3}. Filter block 1 includes structures given by equations \eqref{eq:pos6} and \eqref{eq:pos8}, with input $x_d(t)$ and output signal ${g}_{x_d}(t)$. Similarly, filter block 2 takes ${g}_{x_d}(t)$ as input and provides ${g}_{v_d}(t)$ using equations \eqref{eq:pos9} and \eqref{eq:pos10}.
\begin{figure}[h!]
    \centering
    \includegraphics[width=9cm,height=7cm]{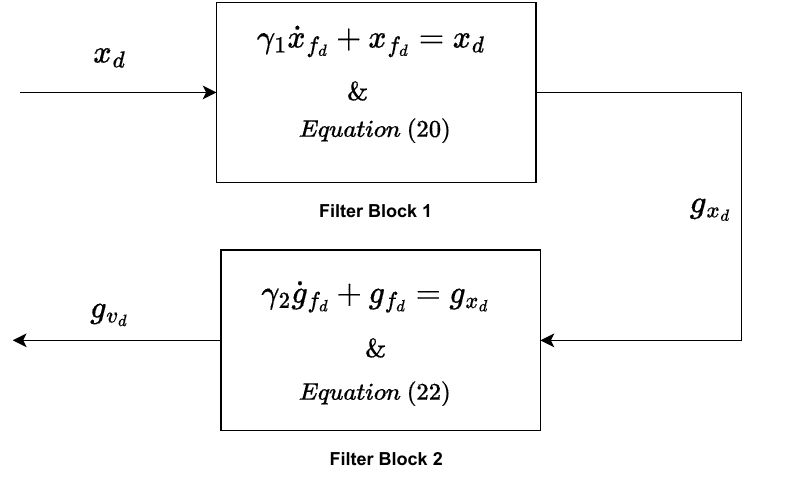} 
    \caption{Filters illustration}
    \label{fig:3}
\end{figure}

Following the rigorous analysis presented in the subsequent sections, we can use ${g}_{v_d}(t)$ as an estimate of $\ddot{x}_d(t)$, however, this estimate can suffer from peaking phenomenon due to the low value of $\gamma_2$ which might jeopardize the saturated thrust requirements to design a smooth attitude tracking controller. To get a saturated thrust input, a differential projection filter for a bounded estimate $g_1(t)$ of $\ddot{x}_d(t)$ with design parameter $\gamma_{21} \in \mathbb{R}_{+}$ and $g_1(0) = 0$ is proposed as 
\begin{equation}\label{eq:pos11}
    \resizebox{1\width}{!}{$%
    \begin{aligned}[b]
    & \dot{g}_1 = Proj(g_1,\phi) = 
     \begin{cases}
    \phi - \frac{\nabla{f(g_1)}\left(\nabla{f(g_1)}\right)^T}{||\nabla{f(g_1)}||^2}\phi f(g_1),&\text{if } f(g_1)>0 \ \& \  \phi^T\nabla{f(g_1)}>0\\
    \phi,             & \text{otherwise}
\end{cases} ,
    \end{aligned}$%
    }%
\end{equation}
where 
\begin{equation}\label{eq:pos12}
\resizebox{1\width}{!}{$%
\phi = \frac{1}{\gamma_{21}}({g}_{v_d} - g_1),
f(g_1) = \frac{||g_1||^2 - h_2^2}{\epsilon_0h_2^2}, 
\nabla{f(g_1)} = \frac{2g_1}{\epsilon_0h_2^2},\nonumber $%
}%
\end{equation}
where $h_2$ is the known upper bound of the $\ddot{x}_d$ and $\epsilon_0>0$ is the tolerance.

\subsection{Auxiliary Desired Attitude Construction}
 The desired attitude $R_c(t)$ is constructed exploiting the virtual control input $f_d(t)$ given in \eqref{eq:pos5} and the desired heading direction $x_{B_d}(t)$ as 
\begin{align}\label{eq:Rc}
&R_c = [x_{B_c},y_{B_c},z_{B_c}],  \\
&z_{B_c} = \frac{f_d}{\|f_d\|} \quad y_{B_c} = -\frac{x_{B_d} \times z_{B_c}}{\|x_{B_d} \times z_{B_c}\|} \quad x_{B_c} = y_{B_c} \times z_{B_c}.
\end{align}
 Once we construct the desired attitude matrix a well defined\footnote{We assume that $x_{B_d}(t) \nparallel z_{B_c}(t)$. We note that one can circumvent this singularity in differential
flatness transformation using the Hopf Fibration on SO(3) proposed in [25]. However this is out of scope of the current problem formulation.} $R_c(t)$ as given in \eqref{eq:Rc}, the desired angular velocity $\Omega_c(t)$ and acceleration $\dot{\Omega}_c(t)$ corresponding to this can be calculated as 
\begin{align}
\hat{\Omega}_c = &R_c^T\dot{R}_c, \label{eq:att1} \\
\dot{\hat{\Omega}}_c = &R_c^T\ddot{R}_c - \hat{\Omega}_c^2, \label{eq:att2}
\end{align}

Note that the above calculations are essential to design an accurate attitude tracking control law for the dynamics given in \eqref{eq:dyn3} and \eqref{eq:dyn4} \cite{Geometric_2010,Exponential_Geometric}. However, one requires information about the higher derivatives of the virtual input $f_d(t)$ and heading direction $x_{B_d}(t)$ to calculate $\dot{R}_c, \ddot{R}_c$. Providing the information mentioned earlier is not feasible as it involves higher derivatives of the desired trajectory and even the multi-rotor UAV's states such as rate of acceleration (jerk). This issue is tackled by systematically defining an appropriate auxiliary filter dynamics for the available signal $R_c(t) \in SO(3)$ to obtain a filtered signal $R_d(t) \in SO(3)$ strictly belonging to the special orthogonal group and an angular velocity $\Omega_d(t)$ associated to it as 
\begin{align}
\dot{R}_d = &R_d\hat{\Omega}_d, \label{eq:att3} \\
\dot{\Omega}_d = &-\frac{1}{\gamma_4}e_{\Omega_{dc}}+ e_{R}(R_d,R_c) + \frac{1}{\gamma_3}E(R_d,R_c)\Omega_d, \label{eq:att4}
\end{align}
where $e_{R}(R_d,R_c)$ is the attitude error between $R_c$ and $R_d$ as described in Appendix A. $\gamma_3, \gamma_4 \in \mathbb{R}_{+}$ are the constant design parameters. The term $E(R_d,R_c)$ is also described in Appendix \ref{secA1}. The error $e_{\Omega_{dc}} = \Omega_d + \frac{1}{\gamma_3}e_{R_{dc}}$ is carefully designed based on the stability analysis in the subsequent section. 

\subsection{Effect of large attitude errors on $f$}
Authors in \cite{Geometric_2010} discussed the necessity of designing the thrust control input $f$ to reduce its magnitude whenever there are large attitude errors. They do so by multiplying the magnitude of the virtual control input $f_d$ with the dot product of $z_{B_c} = R_ce_3$ and $z_{B} = Re_3$, i.e., $f = \|f_d\|e_3^TR_c^TRe_3$. The term $e_3^TR_c^TRe_3$ represents the cosine of the angle between $z_{B_c}$ and $z_{B}$, which can make the control input $f$ negative whenever the angle is greater than $90^\circ$, which is not feasible for the quadrotor to attain. To address this, authors in \cite{Disturbance_Kar} suggested using simply the magnitude of the virtual control input, i.e., $f = \|f_d\|$ as the thrust control input. However, this may cause instability in the overall system dynamics by generating unnecessarily high thrust for large attitude errors. The careful analysis and design of the thrust control input in equation \eqref{eq:pos4} attenuates both of these problems by multiplying the scaling quantity $\sqrt{\frac{1 + e_3^TR_c^TRe_3}{2}}$ to $\|f_d\|$ which is nothing but cosine of half of the angle between $z_{B_c}$ and $z_{B}$. An illustrative comparison of different thrust inputs with the respective virtual control inputs in the case of \cite{Geometric_2010}, \cite{Disturbance_Kar} and the proposed method w.r.t to the angle between $z_{B_c}$ and $z_{B}$ is provided in Figure \ref{Fig:Thrust}.

\begin{figure}
    \centering
    \includegraphics[width=9cm,height=6cm]{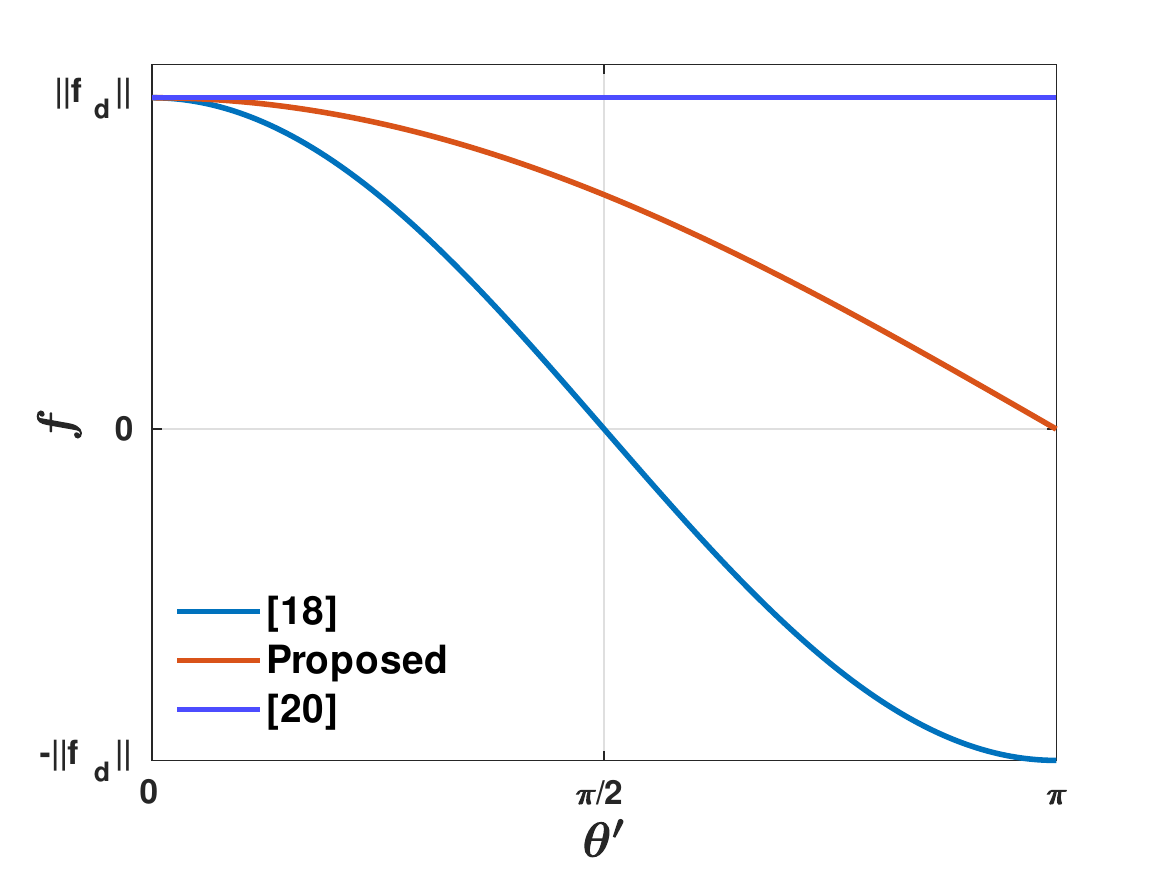}
    \caption{Thrust control input comparison. $\theta^{\prime}$ is the angle between $z_{B_c}$ and $z_{B}$.}
    \label{Fig:Thrust}
\end{figure}

\begin{remark} \label{remark1}
Note that the proposed virtual control input \eqref{eq:pos5} is always positive for a tunable $k_{\alpha}$, which is a necessary condition for the existence of the desired attitude $R_c(t)$. Moreover, the requirement of proportionally reduced thrust magnitude for large attitude error is inherently captured in the designed thrust input \eqref{eq:pos4} in the whole configuration space (especially angles greater than $90^\circ$). 
\end{remark}

\subsection{Attitude Tracking Controller}
We now design the attitude control input $M$ \cite{Exponential_Geometric} to track the auxiliary reference signals $R_d(t),\Omega_d(t),\dot{\Omega}_d(t)$ as follows   
\begin{align}\label{eq:att8}
    M = &-k_Re_R(R,R_d) - k_{\Omega}e_{\Omega}(R,\Omega,R_d,\Omega_d) + \Omega \times J\Omega - J(\hat{\Omega}R^TR_d\Omega_d - R^TR_d\dot{\Omega}_d), 
\end{align}
where $\{k_{R}, k_{\Omega} \in \mathbb{R}_{+}\}$ are constant control design parameters. The attitude and angular velocity errors $e_R(R,R_d)$ and $e_{\Omega}(R,\Omega,R_d,\Omega_d)$ are defined in detail in Appendix \ref{secA1}.

\section{Theortical Analysis} \label{Sec:Analysis}
To comment on the stability and error convergence of the complete system, we first derive the comprehensive closed-loop error dynamics for individual subsystems.

\subsection{Position Tracking Error System}
Using the relations given in \eqref{eq:pos1} and \eqref{eq:pos2} the time derivative of $e_x$ can be given as 
\begin{align} \label{eq:poser1}
    \dot{e}_x = e_{\alpha} - \alpha_{x} \tanh(e_x) - \tanh(e_f).
\end{align}

The open-loop dynamics of the filter tracking error multiplied by mass  $me_{\alpha}(t)$ substituting the system dynamics \eqref{eq:dyn2} and auxiliary filter dynamics \eqref{eq:pos3} can be written as  
\begin{align}\label{eq:poser2}
    m\dot{e}_{\alpha} = &mge_3 - fRe_3 - m\ddot{x}_d + \chi  -mk_{\alpha}e_{\alpha} + m\tanh(e_x) 
    \end{align}
where $ \chi = m\alpha_{{x}}Sech^2(e_x)\dot{e}_x - m\alpha_{f}\tanh(e_f)$. To better understand the attitude dynamics coupling, the virtual control input $f_d(t)$ is added and subtracted to the right-hand side of \eqref{eq:poser2} as 
\begin{align}\label{eq:poser3}
    m\dot{e}_{\alpha} = &mge_3 + \Delta{f} + \chi - m\ddot{x}_d -mk_{\alpha}e_{\alpha} + m\tanh(e_x) - f_d,
\end{align}
where $\Delta{f} = f_d - fRe_3$.    

Using the expression for virtual control input from \eqref{eq:pos5}, the closed-loop filter error dynamics can be written as 
\begin{align}\label{eq:poser4}
m\dot{e}_{\alpha} = & \Delta{f} + \chi - m(\ddot{x}_d - g_1) -mk_{\alpha}e_{\alpha} - m\tanh(e_x) + mk_{\alpha}\tanh(e_f).
\end{align}

To analyze the dynamical behavior of the proposed filter structures for desired position reference derivatives $\frac{d^nx_d(t)}{dt^n}$, we propose the following error system  $e_d(t)$ as 
\begin{align}\label{eq:poser5}
e_d = \begin{bmatrix}
e_{g_x} \\
e_{\ddot{x}_d}\\
e_{g_v} \\
e_{g_1} \\
\end{bmatrix} = \begin{bmatrix}
\dot{x}_d - {g}_{x_d} \\
\ddot{x}_{d} - \dot{g}_{x_d} \\
\dot{g}_{x_d} - {g}_{v_d} \\
\ddot{x}_{d} - g_1 \\
\end{bmatrix}.
\end{align}

Using the expressions from \eqref{eq:pos6}, \eqref{eq:pos7}, \eqref{eq:pos9} and \eqref{eq:pos11}, the closed loop error dynamics for error system \eqref{eq:poser5} can be given as 
\begin{align}\label{eq:poser6}
\begin{bmatrix}
    \dot{e}_{g_x}\\
    \dot{e}_{\ddot{x}_d} \\
    \dot{e}_{g_v} \\
    \dot{e}_{g_1}
\end{bmatrix} = \begin{bmatrix}
    -\frac{1}{\gamma_1}e_{g_x} + \ddot{x}_d \\
    -\frac{1}{\gamma_1}e_{\ddot{x}_d} + \dddot{x}_{d} \\
    -\frac{1}{\gamma_2}e_{g_v} +\frac{1}{\gamma_1}e_{\ddot{x}_d} \\
     -\frac{1}{\gamma_{21}}e_{g_1} + \frac{1}{\gamma_{21}}e_{g_v} + \frac{1}{\gamma_{21}}{e}_{\ddot{x}_d} + \Phi_{g_1} + \dddot{x}_{d},
\end{bmatrix}
\end{align}
where 
\begin{align} \label{eq:poser7}
    \Phi_{g_1} = 
\frac{\nabla{f(g_1)}(\nabla{f(g_1)})^T}{||\nabla{f(g_1)}||^2}\phi f(g_1) 
\end{align}
which appears only if $f(g_1)>0 \ \& \  \phi^T\nabla{f(g_1)}>0 $. Moreover, ${e}_{g_1}^T\Phi_{g_1} \leq 0$ (Please refer to the properties of projection in \cite{projection}).

\subsection{Attitude Tracking Error System}
The time derivatives of the attitude error system with its properties are discussed in detail in Appendix \ref{secA1}, which will be helpful in the theoretical claims provided in this section. For simplicity in the representation, we will use the following notations for the various attitude and angular velocity errors throughout the paper.
\begin{align} \label{eq:atter5}
{e}_R(R,R_d) = e_{R_d}, \quad {e}_R(R,R_c) = e_{R_c}, \quad {e}_R(R_d,R_c) = e_{R_{dc}}, \quad  e_{\Omega}(R,\Omega,R_d,\Omega_d) = e_{\Omega_d}.
\end{align}

Using the control input $M$, the closed loop error dynamics of angular error $e_{\Omega_d}$ premultiplied with inertia matrix $J$ can be given as 
\begin{align} \label{eq:atter6}
 J\dot{e}_{\Omega_d} = -k_Re_{R_d} - k_{\Omega}e_{\Omega_d}.
\end{align}

\subsection{Stability Analysis}
The stability and convergence analysis is systematically presented in this section. We first state the following lemmas, which will be utilized in the subsequent analysis.  

\begin{lemma}\label{lemma1}
There exist positive constants $\alpha_1,\alpha_2$ such that 
\begin{align} \label{eq:lemma1}
    \|f_d\| \leq \alpha_1, \quad \|f_d\| \geq \alpha_2.
\end{align}
Further, $\alpha_2$ is strictly positive if 
\begin{align}
k_{\alpha}<g-h_2-2 \label{Gain_K}.
\end{align}
\end{lemma}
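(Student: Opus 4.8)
The plan is to bound $\|f_d\|$ above and below by treating the four terms in the definition \eqref{eq:pos5} separately. The only nontrivial ingredient is a uniform bound on the estimate $g_1$, which the projection filter \eqref{eq:pos11} supplies: since $g_1(0)=0$ gives $f(g_1(0))=-1/\epsilon_0<0$, and the projection subtracts the outward normal component of $\phi$ exactly when $g_1$ tries to leave the ball $\|g_1\|\leq h_2$ (where $f(g_1)=0$ with $\phi^T\nabla f(g_1)>0$), the standard invariance property of $Proj$ cited in \cite{projection} keeps $\|g_1(t)\|\leq h_2$ for all $t\geq 0$.

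For the upper bound I would apply the triangle inequality to \eqref{eq:pos5}, using $\|mge_3\|=mg$, the bound on $g_1$, and $\|\tanh(e_x)\|,\|\tanh(e_f)\|\leq\sqrt{3}$ (each scalar $\tanh$ is bounded by $1$). This gives
\begin{align}
\|f_d\| \leq mg + mh_2 + 2m\sqrt{3} + mk_\alpha\sqrt{3} =: \alpha_1, \nonumber
\end{align}
a finite positive constant.

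The lower bound is the delicate step. The reverse triangle inequality applied to the full vector is too lossy to recover \eqref{Gain_K}, so instead I would exploit that gravity enters $f_d$ only along $e_3$ and estimate $\|f_d\|\geq |e_3^T f_d|$. Reading off the third component of \eqref{eq:pos5} and using $|e_3^T g_1|\leq\|g_1\|\leq h_2$ together with $|\tanh(\cdot)|\leq 1$ yields
\begin{align}
e_3^T f_d \geq mg - mh_2 - 2m - mk_\alpha = m(g - h_2 - 2 - k_\alpha). \nonumber
\end{align}
Setting $\alpha_2 = m(g-h_2-2-k_\alpha)$, the condition $k_\alpha<g-h_2-2$ makes $\alpha_2>0$, hence $e_3^T f_d>0$ and therefore $\|f_d\|\geq e_3^T f_d\geq\alpha_2>0$, which is precisely \eqref{Gain_K}.

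The main obstacle is exactly this lower bound: the naive vector estimate would only give $\|f_d\|\geq m(g-h_2-2\sqrt{3}-k_\alpha\sqrt{3})$, a strictly more restrictive requirement on $k_\alpha$. Projecting onto $e_3$ sharpens it because each of the two $\tanh$ terms contributes at most $1$ in that single direction rather than $\sqrt{3}$ across all three components. A secondary subtlety is the projection tolerance: strictly, the invariant set has radius $h_2\sqrt{1+\epsilon_0}$, so one either takes $\epsilon_0$ small enough that $\|g_1\|\leq h_2$ is effectively tight or absorbs the factor into a correspondingly tightened gain condition.
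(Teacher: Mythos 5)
Your proof is correct and follows essentially the same route as the paper: the triangle inequality with $\|g_1\|\leq h_2$ and $\|\tanh(\cdot)\|\leq\sqrt{3}$ for the upper bound, and projection onto the third component $e_3^Tf_d$ (where each $\tanh$ contributes at most $1$) to obtain the sharper lower bound $\alpha_2=m(g-h_2-2-k_\alpha)$. Your remark that the projection operator strictly only guarantees $\|g_1\|\leq h_2\sqrt{1+\epsilon_0}$ is a fair refinement that the paper glosses over, but it does not change the argument.
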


\begin{proof} 
    Please refer to Appendix \ref{secA2} for the proof.
\end{proof}

\begin{lemma}\label{lemma2}
The quantity $\Delta{f} = f_d - fRe_3$ satisfies the following bound 
\begin{align} \label{eq:lemma3}
\|\Delta{f}\| \leq 
    2\alpha_1(\|e_{R_{dc}}\| + \|e_{R_d}\|).
\end{align}
\end{lemma}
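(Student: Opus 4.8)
The plan is to substitute the designed thrust \eqref{eq:pos4} and virtual input \eqref{eq:pos5} directly into $\Delta f$ and reduce the claim to a purely geometric inequality about two unit vectors, then close with Lemma \ref{lemma1}. By the construction in \eqref{eq:Rc} we have $R_c e_3 = z_{B_c} = f_d/\|f_d\|$, hence $f_d = \|f_d\| R_c e_3$. Writing $a = R_c e_3$ and $b = R e_3$ (both unit vectors) and letting $\theta$ denote the angle between them, the thrust \eqref{eq:pos4} becomes $f = \|f_d\|\sqrt{(1 + a^T b)/2} = \|f_d\|\cos(\theta/2)$, since $e_3^T R_c^T R e_3 = a^T b = \cos\theta$ and $\theta \in [0,\pi]$. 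Therefore $\Delta f = \|f_d\|\,(a - \cos(\theta/2)\,b)$, and it suffices to bound $\|a - \cos(\theta/2)\,b\|$ and invoke $\|f_d\| \le \alpha_1$.

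The first key step is the algebraic fact that the half-angle thrust scaling never increases the distance to $a$, i.e. $\|a - \cos(\theta/2)\,b\| \le \|a - b\|$. I would prove this by comparing squared norms: with $c := \cos(\theta/2) \in [0,1]$ and $a^T b = \cos\theta = 2c^2 - 1$, a direct expansion gives $\|a-b\|^2 - \|a - c\,b\|^2 = (1-c)(1 + c - 2\cos\theta) = (4c+3)(1-c)^2 \ge 0$ on $c \in [0,1]$. This is the only genuinely new computation in the proof, and it establishes $\|\Delta f\| \le \|f_d\|\,\|R_c e_3 - R e_3\|$, quantifying the intuition (from the paragraph preceding Remark \ref{remark1}) that the scaling $\sqrt{(1+e_3^TR_c^TRe_3)/2}$ tempers the thrust error relative to the unscaled $\|f_d\|$ direction.

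Next I would insert the filtered reference $R_d$ and split by the triangle inequality, $\|R_c e_3 - R e_3\| \le \|R_c e_3 - R_d e_3\| + \|R_d e_3 - R e_3\|$, matching the two terms in the claimed bound. Each third-column difference is then controlled by the corresponding attitude error through a property $\|R_i e_3 - R_j e_3\| \le 2\,\|e_R(R_i,R_j)\|$, applied with $(R_i,R_j) = (R_c,R_d)$ and $(R,R_d)$ to yield $2\|e_{R_{dc}}\|$ and $2\|e_{R_d}\|$; collecting terms and using $\|f_d\| \le \alpha_1$ then gives exactly $\|\Delta f\| \le 2\alpha_1(\|e_{R_{dc}}\| + \|e_{R_d}\|)$. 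The main obstacle is justifying this last column-difference-to-attitude-error inequality: since the error of \cite{Geometric_2010} behaves like $\sin\phi$ in the rotation angle $\phi$ while the column distance behaves like $2\sin(\phi/2)$, the factor-$2$ bound holds only on a sublevel set of the configuration error (roughly $\phi \le 2\pi/3$). I would therefore rely on the corresponding property established in Appendix \ref{secA1} and check that the operating region used there is consistent with it.
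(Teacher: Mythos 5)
Your proof is correct, and its core computation coincides with the paper's: the inequality $\|a-\cos(\tfrac{\theta}{2})b\|\le\|a-b\|$ that you establish via the factorization $(4c+3)(1-c)^2\ge 0$ is exactly the content of the paper's step \eqref{eq:B4}--\eqref{eq:B6}, since $\|a-b\|=2\sin(\theta'/2)=2\|e_{R_c}\|$; the paper just states it as a bound by $2\|e_{R_c}\|$ rather than as a comparison with the unscaled difference. Where you genuinely diverge is in how the bound is split into the two error terms. The paper passes to rotation angles, writing $R_c^TR=(R_c^TR_d)(R_d^TR)$ and invoking $\|\theta_1\|\le\|\theta_2\|+\|\theta_3\|$ together with monotonicity of $\sin(\cdot/2)$ to get $\|e_{R_c}\|\le\|e_{R_{dc}}\|+\|e_{R_d}\|$ (its intermediate identity $\exp(\hat\theta_2)\exp(\hat\theta_3)=\exp(\hat\theta_2+\hat\theta_3)$ in \eqref{eq:B7} is not literally valid for non-commuting rotations, though the angle inequality it is meant to support is the triangle inequality for the bi-invariant metric and is true). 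You instead apply the Euclidean triangle inequality directly to the third columns and bound each column difference by twice the corresponding attitude error; this sidesteps the angle-addition step entirely and is arguably cleaner. Your one residual worry is unfounded here: because this paper uses the error \eqref{eq:A2} with $\|e_R\|=\sin(\|\theta\|/2)$ (not the $\sin\|\theta\|$-type error of the original geometric controller), the identity $\|R_ie_3-R_je_3\|=2\sin(\|\theta\|/2)\,\|e_{3\perp}\|\le 2\|e_R(R_i,R_j)\|$ holds throughout the sublevel set $\mathcal{L}_2$ with no restriction to $\|\theta\|\le 2\pi/3$, so the factor $2$ is available exactly where Lemma~\ref{lemma1} and the rest of the analysis operate.
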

\begin{proof}
    Please refer to Appendix \ref{secA3} for the proof. 
\end{proof}

\begin{lemma}\label{lemma3}
The desired angular velocity $\Omega_c$ satisfies the following bound
\begin{align} \label{eq:lemma5}
\|\hat{\Omega}_c\| \leq &  \rho_1(\|e_{\alpha}\|,\|\tanh(e_x)\|,\|\tanh(e_f)\|) + \rho_{01}.
\end{align}
\end{lemma}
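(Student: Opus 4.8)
The plan is to bound $\|\hat{\Omega}_c\|$ by differentiating the columns of $R_c$ and tracing every resulting term back either to the error states $e_{\alpha}, \tanh(e_x), \tanh(e_f)$ or to quantities that Assumption~\ref{assump1} and Lemma~\ref{lemma1} render uniformly bounded. First I would exploit orthogonality of $R_c$ to write $\|\hat{\Omega}_c\| = \|R_c^T\dot{R}_c\| \leq \|\dot{R}_c\|$, and then bound $\|\dot{R}_c\|$ by the sum of the column-derivative norms $\|\dot{x}_{B_c}\| + \|\dot{y}_{B_c}\| + \|\dot{z}_{B_c}\|$. Since $z_{B_c} = f_d/\|f_d\|$, differentiating the normalization gives $\dot{z}_{B_c} = \|f_d\|^{-1}(I - z_{B_c}z_{B_c}^T)\dot{f}_d$, so that $\|\dot{z}_{B_c}\| \leq \|\dot{f}_d\|/\|f_d\| \leq \|\dot{f}_d\|/\alpha_2$ using the strictly positive lower bound from Lemma~\ref{lemma1}. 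The whole estimate thus reduces to bounding $\|\dot{f}_d\|$.

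The heart of the proof is the bound on $\|\dot{f}_d\|$, where the state-dependent part of $\rho_1$ originates. Differentiating \eqref{eq:pos5} gives $\dot{f}_d = -m\dot{g}_1 + 2m\,Sech^2(e_x)\dot{e}_x - mk_{\alpha}Sech^2(e_f)\dot{e}_f$. I would substitute the closed-loop identity \eqref{eq:poser1} for $\dot{e}_x$ and observe that the filter dynamics \eqref{eq:pos3} produce the clean cancellation $Sech^2(e_f)\dot{e}_f = -k_{\alpha}e_{\alpha} + \tanh(e_x) - \alpha_f\tanh(e_f)$, since $Sech^2 = Cosh^{-2}$. Using $\|Sech^2(e_x)\| \leq 1$, this yields an affine bound in $\|e_{\alpha}\|, \|\tanh(e_x)\|, \|\tanh(e_f)\|$ together with the term $m\|\dot{g}_1\|$. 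The latter is a constant: the projection filter \eqref{eq:pos11} satisfies $\|\dot{g}_1\| = \|Proj(g_1,\phi)\| \leq \|\phi\| = \gamma_{21}^{-1}\|g_{v_d} - g_1\|$, and both $g_1$ (confined by the projection to $\|g_1\| \leq h_2\sqrt{1+\epsilon_0}$) and $g_{v_d}$ (a cascade of stable first-order filters driven by the signals of Assumption~\ref{assump1}) are bounded; this piece contributes only to the constant $\rho_{01}$.

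It remains to propagate this structure through $y_{B_c}$ and $x_{B_c}$. Writing $w = x_{B_d}\times z_{B_c}$ and $y_{B_c} = -w/\|w\|$, the same normalization-derivative argument gives $\|\dot{y}_{B_c}\| \leq \|\dot{w}\|/\|w\|$ with $\dot{w} = \dot{x}_{B_d}\times z_{B_c} + x_{B_d}\times\dot{z}_{B_c}$, so that $\|\dot{w}\| \leq h_5 + \|\dot{z}_{B_c}\|$ by Assumption~\ref{assump1} and $\|x_{B_d}\| = \|z_{B_c}\| = 1$. For $x_{B_c} = y_{B_c}\times z_{B_c}$ I would use $\|\dot{x}_{B_c}\| \leq \|\dot{y}_{B_c}\| + \|\dot{z}_{B_c}\|$. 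Collecting the three columns, every term is an affine function of $\|\dot{z}_{B_c}\|$ and of $h_5$, and $\|\dot{z}_{B_c}\|$ is in turn affine in $\|e_{\alpha}\|, \|\tanh(e_x)\|, \|\tanh(e_f)\|$ with a constant offset, which furnishes the claimed decomposition $\rho_1(\|e_{\alpha}\|,\|\tanh(e_x)\|,\|\tanh(e_f)\|) + \rho_{01}$.

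The main obstacle is the normalization in $y_{B_c}$: the factor $1/\|w\| = 1/\|x_{B_d}\times z_{B_c}\|$ must be uniformly bounded, which requires a strictly positive lower bound on $\|x_{B_d}\times z_{B_c}\| = \sin\theta'$, i.e. that $x_{B_d}$ and $z_{B_c}$ remain non-parallel — precisely the standing assumption recorded in the footnote following \eqref{eq:Rc}. The secondary technical point is rigorously justifying that the filter cascade producing $g_{v_d}$, and hence $\dot{g}_1$, stays bounded; I would handle this directly from the closed-form solutions \eqref{eq:pos8} and \eqref{eq:pos10} together with the bounds $h_1,\dots,h_4$ of Assumption~\ref{assump1}, ensuring these quantities enter only the constant $\rho_{01}$ and never $\rho_1$.
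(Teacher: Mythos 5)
Your proposal follows essentially the same route as the paper's proof in Appendix C: bound $\|\hat{\Omega}_c\|$ by $\|\dot{R}_c\|$ column by column, reduce everything to $\|\dot{f}_d\|/\|f_d\|$ via the lower bound $\alpha_2$ from Lemma \ref{lemma1} and the non-parallelism of $x_{B_d}$ and $z_{B_c}$, then differentiate \eqref{eq:pos5} and substitute the closed-loop dynamics so that the state-dependent part is affine in $\|e_\alpha\|,\|\tanh(e_x)\|,\|\tanh(e_f)\|$ while the projection-filter terms contribute only to the constant $\rho_{01}$. The only differences are cosmetic (a slightly sharper constant for the normalized-vector derivative, and bounding $\phi$ via the boundedness of $g_{v_d}$ and $g_1$ directly rather than through the filter errors of Lemma \ref{lemma8}), so the argument is correct and matches the paper's.
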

\begin{proof}
    Please refer to Appendix \ref{secA4} for the proof. 
\end{proof}

\begin{lemma}\label{lemma8}
The filter errors given in equation \eqref{eq:poser6} can be bounded as 
\begin{align} 
    \|e_{g_x}(t)\| &\leq \sqrt{\|e_{g_x}(0)\|^2exp\left(-\frac{1}{\gamma_1}t\right) + \gamma_1^2h_2^2} \leq max(\|e_{g_x}(0)\|,\gamma_1h_2) = \alpha_3,  \label{eq:lemma81}\\
    \|e_{\ddot{x}_d}(t)\| & \leq  \sqrt{\|e_{\ddot{x}_d}(0)\|^2exp\left(-\frac{1}{\gamma_1}t\right) + \gamma_1^2h_3^2} \leq max(\|e_{\ddot{x}_d}(0)\|,\gamma_1h_3) = \alpha_4, \label{eq:lemma82}\\
    \|e_{g_v}(t)\|& \leq \sqrt{\|e_{g_v}(0)\|^2exp\left(-\frac{1}{\gamma_2}t\right) + \frac{\gamma_2^2}{\gamma_1^2}\|e_{\ddot{x}_d}\|^2} \leq max\bigg(\|e_{g_v}(0)\|,\frac{\gamma_2}{\gamma_1}max(e_{\ddot{x}_d}(0),\gamma_1h_3)\bigg) = \alpha_5, \label{eq:lemma83} \\
    \|e_{g_1}(t)\| & \leq max\left(\|e_{g_1}(0)\|,\sqrt{{8}(\alpha_4^2 +\alpha_5^2) + 2\gamma_{21}^2h_3^2}\right) = \alpha_6. \label{eq:lemma84}
\end{align}
\end{lemma}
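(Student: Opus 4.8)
The plan is to bound each component of the filter error system in \eqref{eq:poser6} individually by solving or estimating its scalar differential equation, proceeding top-to-bottom since each error feeds into the next. First I would handle $e_{g_x}$: its dynamics $\dot{e}_{g_x} = -\frac{1}{\gamma_1}e_{g_x} + \ddot{x}_d$ is a stable linear system driven by $\ddot{x}_d$, whose norm is bounded by $h_2$ via Assumption \ref{assump1}. Using the variation-of-constants formula (or, more cleanly, a comparison/ISS estimate on $\frac{d}{dt}\|e_{g_x}\|^2 = -\frac{2}{\gamma_1}\|e_{g_x}\|^2 + 2e_{g_x}^T\ddot{x}_d$), I would extract the exponentially decaying initial-condition term plus the steady-state contribution $\gamma_1^2 h_2^2$, then dominate the square-root of a sum by the max of the two pieces to obtain \eqref{eq:lemma81} and the constant $\alpha_3$. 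The same argument verbatim, with $\dddot{x}_d$ (bounded by $h_3$) as the input, yields \eqref{eq:lemma82} with $\alpha_4$.

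Next I would treat $e_{g_v}$, whose dynamics $\dot{e}_{g_v} = -\frac{1}{\gamma_2}e_{g_v} + \frac{1}{\gamma_1}e_{\ddot{x}_d}$ is again a stable first-order system, but now driven by the already-bounded signal $\frac{1}{\gamma_1}e_{\ddot{x}_d}$ rather than a trajectory derivative. Substituting the bound $\|e_{\ddot{x}_d}\| \leq \alpha_4$ as the effective input magnitude and repeating the scalar energy estimate gives the steady-state term $\frac{\gamma_2^2}{\gamma_1^2}\|e_{\ddot{x}_d}\|^2$ and hence \eqref{eq:lemma83} with $\alpha_5$; the nested $\max$ simply records that the driving input is itself $\max(\|e_{\ddot{x}_d}(0)\|,\gamma_1 h_3)$.

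The last bound, for $e_{g_1}$, is the main obstacle, because its dynamics in \eqref{eq:poser6} contains the extra nonlinear projection term $\Phi_{g_1}$ together with the $\dddot{x}_d$ drive, and it is fed by \emph{both} $e_{g_v}$ and $e_{\ddot{x}_d}$. Here I cannot rely on a clean exponential-plus-steady-state split as before. Instead I would work with the Lyapunov-type function $\frac{1}{2}\|e_{g_1}\|^2$ and differentiate, crucially using the projection property $e_{g_1}^T\Phi_{g_1}\leq 0$ stated just after \eqref{eq:poser7} to \emph{discard} the nonlinear term, so that it can only help the bound. After applying Young's inequality to the cross terms $e_{g_1}^T(\frac{1}{\gamma_{21}}e_{g_v} + \frac{1}{\gamma_{21}}e_{\ddot{x}_d} + \dddot{x}_d)$ to separate the feeding signals, and substituting the previously established bounds $\alpha_4,\alpha_5$ and $h_3$, I would obtain a differential inequality for $\|e_{g_1}\|^2$ whose right-hand side has a negative definite term $-\frac{c}{\gamma_{21}}\|e_{g_1}\|^2$ balancing a constant forcing proportional to $\alpha_4^2 + \alpha_5^2 + \gamma_{21}^2 h_3^2$. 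Invoking the comparison lemma and again bounding the root-of-a-sum by a $\max$ then yields \eqref{eq:lemma84} with the explicit coefficients $8$ and $2$; I expect these constants to emerge from how the Young's-inequality weights are chosen so that the dissipation rate dominates. The key technical care is ensuring the forcing constant is genuinely finite and $\gamma_{21}$-controllable, which the projection sign condition guarantees by removing the only non-dissipative, potentially unbounded contribution.
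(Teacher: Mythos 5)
Your proposal is correct and follows essentially the same route as the paper: a cascaded treatment of the four errors using the quadratic Lyapunov function $\tfrac{1}{2}\|e\|^2$ for each, the comparison lemma for the resulting scalar differential inequalities, and, for $e_{g_1}$, discarding the projection term via $e_{g_1}^T\Phi_{g_1}\leq 0$ before applying Young's inequality to the cross terms. The paper's appendix proof is exactly this argument, with the Young's-inequality weights chosen so that $-\tfrac{1}{\gamma_{21}}$ degrades only to $-\tfrac{1}{2\gamma_{21}}$, producing the coefficients $8$ and $2$ you anticipated.
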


\begin{proof}
    Please refer to Appendix \ref{secA5} for the proof.
\end{proof}

Next, we derive the stability and convergence properties of the attitude subsystem of the multirotor tracking the auxiliary desired attitude. 

\begin{theorem} \label{theorem1} (Adapted from \cite{Exponential_Geometric})
The closed loop error dynamics of $e_1(t) = [e_{R_d}(t)^T,e_{\Omega_d}(t)^T]^T$ is exponentially stable with the control input $M$ defined in \eqref{eq:att8} for the smooth desired attitude reference $\left(R_d(t), \ \Omega_d(t), \ \dot{\Omega}_d(t) \right)$. Moreover, an estimate of the region of attraction is given by 
\begin{align} 
    &\Psi(R(0),R_d(0)) < 2 , \label{eq:th11}\\
    &\|e_{\Omega_d}(0)\|^2 < \frac{2k_R}{\lambda_{max}(J)} (2 - \Psi(R(0),R_d(0)). \label{eq:th12}
\end{align}
\end{theorem}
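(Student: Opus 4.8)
The plan is to adapt the exponential attitude-tracking argument of \cite{Exponential_Geometric}, since the closed-loop dynamics \eqref{eq:atter6} reduces to the standard form $J\dot{e}_{\Omega_d} = -k_R e_{R_d} - k_\Omega e_{\Omega_d}$. First I would recall from Appendix \ref{secA1} the configuration error function $\Psi(R,R_d) = \tfrac{1}{2}\mathrm{tr}[I - R_d^T R]$ together with the three properties I will lean on throughout, all valid on the sublevel set $\{\Psi < 2\}$: the kinematic identities $\dot{\Psi} = e_{R_d}^T e_{\Omega_d}$ and $\dot{e}_{R_d} = E(R,R_d)\,e_{\Omega_d}$ with $\|E(R,R_d)\| \leq 1$, and the sandwich bound $\tfrac{1}{2}\|e_{R_d}\|^2 \leq \Psi(R,R_d) \leq \tfrac{1}{2-\Psi(R,R_d)}\|e_{R_d}\|^2$.

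Next I would introduce the cross-term Lyapunov candidate
\begin{align}
V = \tfrac{1}{2}e_{\Omega_d}^T J e_{\Omega_d} + k_R\,\Psi(R,R_d) + c\,e_{R_d}^T e_{\Omega_d}, \nonumber
\end{align}
with $c>0$ a small constant fixed later. Writing $z = [\|e_{R_d}\|,\ \|e_{\Omega_d}\|]^T$, I would show that on any set $\{\Psi \leq \psi_0 < 2\}$ the candidate is sandwiched as $z^T M_1 z \leq V \leq z^T M_2 z$ for positive-definite $M_1,M_2$ built from $k_R$, $c$, $\lambda_{min}(J)$, $\lambda_{max}(J)$ and $\psi_0$; requiring $M_1 \succ 0$ fixes the first smallness constraint on $c$.

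I would then differentiate $V$ along \eqref{eq:atter6}. Substituting $\dot{\Psi} = e_{R_d}^T e_{\Omega_d}$, $\dot{e}_{R_d} = E e_{\Omega_d}$ and $J\dot e_{\Omega_d} = -k_R e_{R_d} - k_\Omega e_{\Omega_d}$, the indefinite terms $\mp k_R e_{R_d}^T e_{\Omega_d}$ cancel by design and, after applying $\|E\| \leq 1$ and $\lambda_{min}(J^{-1}) = 1/\lambda_{max}(J)$, one is left with $\dot V \leq -z^T W z$ where, schematically,
\begin{align}
W = \begin{bmatrix} \dfrac{c\,k_R}{\lambda_{max}(J)} & -\dfrac{c\,k_\Omega}{2\lambda_{min}(J)} \\[2mm] -\dfrac{c\,k_\Omega}{2\lambda_{min}(J)} & k_\Omega - c \end{bmatrix}. \nonumber
\end{align}
A second smallness condition on $c$ (namely $c<k_\Omega$ together with a determinant inequality) renders $W \succ 0$, so $\dot V \leq -\lambda_{min}(W)\|z\|^2 \leq -(\lambda_{min}(W)/\lambda_{max}(M_2))\,V$, which yields exponential decay of $V$ and hence of $(e_{R_d},e_{\Omega_d})$.

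The step I expect to be the main obstacle is establishing invariance of the region of attraction, i.e. that $\Psi(R(t),R_d(t)) < 2$ for all $t\geq 0$, since only there are the error-function bounds and $W\succ 0$ legitimate. I would close this by monotonicity of $V$: dropping the sign-indefinite cross term for the invariance estimate gives $V(0) \leq \tfrac{1}{2}\lambda_{max}(J)\|e_{\Omega_d}(0)\|^2 + k_R\Psi(R(0),R_d(0))$, and condition \eqref{eq:th12} rearranges to exactly $\tfrac{1}{2}\lambda_{max}(J)\|e_{\Omega_d}(0)\|^2 < k_R\bigl(2-\Psi(R(0),R_d(0))\bigr)$, i.e. $V(0) < 2k_R$. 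Condition \eqref{eq:th11} secures $\Psi(R(0),R_d(0))<2$ so the initial state is admissible. Since $\dot V \leq 0$, we obtain $k_R\Psi(t) \leq V(t) \leq V(0) < 2k_R$, hence $\Psi(t)<2$ for all $t$, and the exponential estimate then propagates over the full sublevel set.
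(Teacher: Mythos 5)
Your proposal is correct and follows the same two-stage architecture as the paper's proof: a cross-term-free Lyapunov function whose monotonicity ($\dot V_1=-k_\Omega\|e_{\Omega_d}\|^2$) pins the trajectory inside the sublevel set $\{\Psi<2\}$ via exactly the rearrangement of \eqref{eq:th12} into $V(0)<2k_R$ that you describe, followed by the augmented candidate with cross term $c\,e_{R_d}^Te_{\Omega_d}$, sandwich matrices, and a smallness condition on $c$ yielding $\dot V\le-\bar\beta V$. (Your use of $\tfrac{1}{2}e_{\Omega_d}^TJe_{\Omega_d}$ rather than the unweighted quadratic is in fact what the paper's matrices $W_{11},W_{12}$ in \eqref{eq:th18} presuppose.) The one genuine divergence is the configuration error function: you recall $\Psi=\tfrac12\mathrm{tr}[I-R_d^TR]$ with the bounds $\tfrac12\|e_{R_d}\|^2\le\Psi\le\tfrac{1}{2-\psi_0}\|e_{R_d}\|^2$ and $\|E\|\le 1$, which is the machinery of \cite{Geometric_2010}; the paper (and Appendix \ref{secA1}) instead uses $\Psi=2-\sqrt{1+\mathrm{tr}[R_d^TR]}$, for which $\|e_{R_d}\|^2\le\Psi\le 2\|e_{R_d}\|^2$ and $\|E\|=\tfrac12$ hold uniformly on all of $\mathcal{L}_2$. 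Your route still proves the theorem--after invariance you may take $\psi_0=V(0)/k_R<2$--but the resulting decay rate and the admissible range of $c$ then depend on the initial condition through $\psi_0$, whereas the square-root error function buys initial-condition-independent sandwich constants and hence a uniform exponential rate over the whole estimated region of attraction, which is precisely why the paper adopts it from \cite{Exponential_Geometric}. If you keep the trace form, state explicitly that the Lipschitz-type bounds are taken on $\{\Psi\le\psi_0\}$ with $\psi_0$ fixed by the invariance step; otherwise the claim $\dot V\le-\bar\beta V$ with a single $\bar\beta$ is not justified on $\mathcal{L}_2$ as a whole.
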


\begin{proof}
    We start by proving that given condition \eqref{eq:th12}, the sublevel set $\mathcal{L}_2 = \{R,R_d \in \mathrm{SO(3)} | \Psi(R,R_d) < 2 \}$ is positively invariant set. Consider the following Lypunov 
    candidate 
    \begin{align} \label{eq:th13}
    V_1 = \frac{1}{2}e_{\Omega_d}^Te_{\Omega_d} + k_R\Psi(R,R_d)
    \end{align}
    
    Taking the time derivative of $V_1(t)$ after substituting the dynamics \eqref{eq:atter6} and \eqref{eq:A4} yields 
    \begin{align} \label{eq:th14}
    \dot{V}_1 = -k_{\Omega}\|e_{\Omega_d}\|^2.
    \end{align}

    Equation \eqref{eq:th14} along with \eqref{eq:th11} and \eqref{eq:th12} implies that 
    \begin{align} \label{eq:th15}
    k_R\Psi(R(t),R_d(t))\leq V_1(t) \leq V_1(0) < 2k_R \quad \forall t>0.
    \end{align}
    This proves that the sub-level set $\mathcal{L}_2$ is a positively invariant set such that $\Psi(R(t),R_d(t)) <2$ and hence the attitude errors are well defined. 
    
    To prove the exponential stability, let's consider another Lyapunov candidate
    \begin{align} \label{eq:th16}
    V_2 = \frac{1}{2}e_{\Omega_d}^Te_{\Omega_d} + k_R\Psi(R,R_d) + c_1e_{\Omega_d}^Te_{R_d},
    \end{align}
    where $c_1$ is a positive constant. Using the properties of configuration error function, for $e_1(t) = [e_{R_d}(t)^T,e_{\Omega_d}(t)^T]^T$ we get 
    \begin{align}\label{eq:th17}
        e_1^TW_{11}e_1 \leq V_1 \leq e_1^TW_{12}e_1,
    \end{align}
    where 
    \begin{align}\label{eq:th18}
        W_{11} = \begin{bmatrix}
        k_R & \frac{c_2}{2} \\
        \frac{c_2}{2} & \frac{\lambda_{min}(J)}{2}
    \end{bmatrix}, \quad     W_{12} = \begin{bmatrix}
        2k_R & \frac{c_2}{2} \\
        \frac{c_2}{2} & \frac{\lambda_{max}(J)}{2}
    \end{bmatrix}.
    \end{align}
    Taking time derivative of $V_2(t)$ and substituting the appropriate error dynamics using \eqref{eq:atter6}, \eqref{eq:A4} and \eqref{eq:A5} will yield
    \begin{align}\label{eq:th19}
        \dot{V}_2 = &- k_{\Omega}e_{\Omega_d}^Te_{\Omega_d} - c_1k_Re_{R_d}^TJ^{-1}e_{R_d} + c_1k_{\Omega}e_{R_d}^TJ^{-1}e_{\Omega_d} + c_1e_{\Omega_d}^TE^T(R,R_d)e_{\Omega_d}.
    \end{align}

    Further taking bounds using the properties in Appendix \ref{secA1} in equation \eqref{eq:th19} will yield 
    \begin{align}\label{eq:th20}
        \dot{V}_2 \leq &-\underbrace{\left[k_{\Omega}-\frac{c_1}{2}-\frac{c_1k_{\Omega}}{2\lambda_{min}(J)}\right]}_{\bar{\lambda}_1}\|e_{\Omega_d}\|^2 - \underbrace{\left[\frac{c_1k_R}{\lambda_{max}(J)} - \frac{c_1k_{\Omega}}{2\lambda_{min}(J)}\right]}_{\bar{\lambda}_2}\|e_{R_d}\|^2.
    \end{align}
    Furthermore, we can write 
    \begin{align}\label{eq:th21}
        \dot{V}_2 \leq & - \bar{\beta} V_2,
    \end{align}
    where $\bar{\beta} = \frac{min(\bar{\lambda}_1,\bar{\lambda}_2)}{\lambda_{max}(W_{12})}$. We choose the parameter so that $\bar{\beta}$ is positive. Then, the relation \eqref{eq:th21} invokes global uniform exponential stability (based on Theorem 4.10 in \cite{Khalil}).
\end{proof}

Now, we are ready to state our main results.

\begin{theorem} \label{theorem2}
Consider the multi-rotor dynamics \eqref{eq:dyn1}, \eqref{eq:dyn2}, \eqref{eq:dyn3} and \eqref{eq:dyn4}, with control inputs \eqref{eq:pos4} and \eqref{eq:att8}. The closed loop dynamics of overall system errors $e(t) = [e_1(t)^T,e_2(t)^T,e_3(t)^T]^T$ where $e_1(t) = [e_{R_d}(t)^T,e_{\Omega_d}(t)^T]^T$, $e_2(t) = [e_{R_{dc}}^T(t),e_{\Omega_{dc}}^T(t)]^T$ and $e_3(t) = [e_{\alpha}^T(t),\tanh(e_x(t))^T,\tanh(e_f(t))^T]^T$, is uniformly ultimately bounded, provided the following gain conditions are satisfied 
\begin{align}
mk_{\alpha}-\frac{m\alpha_{{x}}(3\alpha_{{x}}+1)}{2}-\frac{m\alpha_{f}}{2}- \rho_{03} >0&, \label{Gain1}  \\
m\alpha_x - \frac{m\alpha_{{x}}^2}{2} > 0&, \label{Gain2}  \\
\frac{1}{2\gamma_3}- {2{\alpha_1}^2} > 0&, \label{Gain3}  \\
\frac{m\alpha_{f}}{2}- \frac{m\alpha_{{x}}}{2}>0&, \label{Gain4}  \\
 k_{\Omega}-\frac{c_1}{2}-\frac{c_1k_{\Omega}}{2\lambda_{min}(J)} >0&, \label{Gain5}  \\
 \frac{c_1k_R}{\lambda_{max}(J)} - \frac{c_1k_{\Omega}}{2\lambda_{min}(J)} - {2{\alpha_1}^2} >0&, \label{Gain6} 
\end{align}
where the constant $\rho_{03}$ is defined subsequently. 
\end{theorem}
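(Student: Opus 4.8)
The plan is to build an aggregate Lyapunov function $V = V_{\mathrm{pos}} + V_{\mathrm{filt}} + V_2$ that superposes one storage function per subsystem and to show $\dot V \le -\kappa V + C$ for some $\kappa>0$ and a constant $C$ that shrinks with the auxiliary-filter gains; uniform ultimate boundedness then follows from the standard comparison argument (Theorem 4.18 in \cite{Khalil}). For the attitude block $e_1=[e_{R_d}^T,e_{\Omega_d}^T]^T$ I would reuse $V_2$ from \eqref{eq:th16} verbatim, since Theorem~\ref{theorem1} already delivers $\dot V_2 \le -\bar\lambda_1\|e_{\Omega_d}\|^2-\bar\lambda_2\|e_{R_d}\|^2$ along \eqref{eq:atter6}; the only new ingredient at the aggregate level is the cross term that the position loop injects through $\Delta f$. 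The subsystems are coupled in exactly two places: the error $\Delta f=f_d-fRe_3$ entering the $e_\alpha$ dynamics \eqref{eq:poser4}, which Lemma~\ref{lemma2} bounds by $2\alpha_1(\|e_{R_{dc}}\|+\|e_{R_d}\|)$, and the commanded angular velocity $\Omega_c$ driving the $SO(3)$ filter \eqref{eq:att3}--\eqref{eq:att4}, which Lemma~\ref{lemma3} bounds in terms of the position errors. Identifying these two couplings up front is what makes the gain conditions \eqref{Gain1}--\eqref{Gain6} interpretable.

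For the position block I would take $V_{\mathrm{pos}}=\tfrac{m}{2}\|e_\alpha\|^2 + m\sum_i\ln\cosh(e_{x_i}) + V_f(e_f)$, where the logarithmic potential reproduces $\tanh(e_x)$ upon differentiation and $V_f$ is chosen so that the $Cosh^2(e_f)$ factor in \eqref{eq:pos3} is cancelled, leaving the clean dynamics $\tfrac{d}{dt}\tanh(e_f)=-k_\alpha e_\alpha+\tanh(e_x)-\alpha_f\tanh(e_f)$. Differentiating along \eqref{eq:poser1} and \eqref{eq:poser4} produces a quadratic form in $(\|e_\alpha\|,\|\tanh(e_x)\|,\|\tanh(e_f)\|)$ whose diagonal dominance is precisely conditions \eqref{Gain1}, \eqref{Gain2} and \eqref{Gain4}: the $\chi$ term contributes the $\alpha_x$ and $\alpha_f$ penalties, while the bounded disturbance $m(\ddot x_d-g_1)=m\,e_{g_1}$ is absorbed through Lemma~\ref{lemma8} (bound $\alpha_6$) into the additive constant $C$. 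The $e_\alpha^T\Delta f$ term I would split with Young's inequality as $2\alpha_1\|e_\alpha\|\|e_{R_{dc}}\|\le\tfrac12\|e_\alpha\|^2+2\alpha_1^2\|e_{R_{dc}}\|^2$ and likewise for $\|e_{R_d}\|$; the $\|e_\alpha\|^2$ shares are paid out of $mk_\alpha$ (hence the $\rho_{03}$ slack in \eqref{Gain1}), and the $2\alpha_1^2\|e_{R_{dc}}\|^2$ and $2\alpha_1^2\|e_{R_d}\|^2$ residues are exactly the penalties appearing in \eqref{Gain3} and \eqref{Gain6}.

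For the filter block $e_2=[e_{R_{dc}}^T,e_{\Omega_{dc}}^T]^T$ I would mirror the construction of $V_2$, using $V_{\mathrm{filt}}=\tfrac12\|e_{\Omega_{dc}}\|^2+k\,\Psi(R_d,R_c)+(\text{cross term})$ together with the definition $e_{\Omega_{dc}}=\Omega_d+\tfrac1{\gamma_3}e_{R_{dc}}$, so that the $\tfrac1{\gamma_3}$, $\tfrac1{\gamma_4}$ gains in \eqref{eq:att4} render the nominal derivative negative definite with a $\tfrac1{2\gamma_3}\|e_{R_{dc}}\|^2$ term; subtracting the $2\alpha_1^2\|e_{R_{dc}}\|^2$ residue inherited from $\Delta f$ yields condition \eqref{Gain3}. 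The $\Omega_c$-driven terms in this block are handled by Lemma~\ref{lemma3}, which produces both a part that is linear in the position errors (absorbed into the position quadratic form, feeding the $\rho_{03}$ bookkeeping) and a constant $\rho_{01}$ part (folded into $C$).

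Assembling the three pieces, every off-diagonal coupling has been matched to an on-diagonal negative term, so $\dot V\le -\kappa V + C$ with $\kappa>0$ guaranteed by \eqref{Gain1}--\eqref{Gain6} and $C$ of order $\alpha_6$ and $\rho_{01}$, establishing uniform ultimate boundedness; since $\alpha_6$ and the $\rho_{01}$-type constants vanish as the filter gains $\gamma_1,\gamma_2,\gamma_{21}\to0$, the ultimate bound can be made arbitrarily small without inflating the controller gains. The main obstacle I anticipate is the simultaneous bookkeeping of the two couplings: the $\Delta f$ coupling touches all three blocks at once, while the $\Omega_c$ coupling ties the filter block back to the position block, so the Young's-inequality weights must be chosen consistently across blocks so that each absorbed share lands in a term the corresponding gain condition can actually pay for. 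Verifying that the required weights are mutually compatible --- rather than merely feasible block by block --- is the delicate part, and it is this consistency that forces the specific form of $\rho_{03}$ and the appearance of $2\alpha_1^2$ in both \eqref{Gain3} and \eqref{Gain6}.
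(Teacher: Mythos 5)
Your proposal follows essentially the same route as the paper: the same three-block Lyapunov function ($V_2$ from Theorem~\ref{theorem1} for $e_1$, the $m\sum_i\ln\cosh(e_{x_i})+\tfrac{m}{2}\|e_\alpha\|^2+\tfrac{m}{2}\|\tanh(e_f)\|^2$ storage for the position loop, and $\Psi(R_d,R_c)+\tfrac12\|e_{\Omega_{dc}}\|^2$ for the auxiliary filter, with no extra cross term needed since $e_{\Omega_{dc}}$ is already the composite variable), the same Young's-inequality treatment of the two couplings via Lemma~\ref{lemma2} and Lemma~\ref{lemma3}, and the same comparison argument for UUB. The only detail you omit is that the Lipschitz bound $\|\hat\Omega_c\|^2\le L_1^2\|e_\alpha\|^2+\rho_{02}^2$ holds only for $\|e_\alpha\|\le\bar e_\alpha$, so $\bar e_\alpha$ must additionally be chosen large enough relative to $\max(V(0),D/\lambda_V)$ to make the estimate self-consistent, which the paper does at the end of its Step~3.
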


\begin{proof}
For the sake of simplicity, we proceed in the following steps to prove the theorem claims. 
\bmsubsubsection{Step 1}
Let's first consider the Lyapunov candidate $V_3(t)$ for the translational error $e_3(t)$ of multi-rotor 
\begin{align}\label{eq:stab_pos2}
    V_3 = &\frac{1}{2}me_{\alpha}^Te_{\alpha} + m\ln\left(\cosh(e_{x_1})\right) + m\ln \left (\cosh(e_{x_2})\right) + m\ln\left( \cosh(e_{x_3})\right) + \frac{1}{2}m\tanh(e_f)^T\tanh(e_f).
\end{align}
Taking the time derivative of $V_3(t)$ and substituting equations \eqref{eq:pos3}, \eqref{eq:poser1} and \eqref{eq:poser4} yields
\begin{align}\label{eq:stab_pos3}
\dot{V}_3 = & -mk_{\alpha}e_{\alpha}^Te_{\alpha} + e_{\alpha}^T\Delta{f} + e_{\alpha}^T\chi - me_{\alpha}^Te_{g_1} - me_{\alpha}^T\tanh(e_x) + mk_{\alpha}e_{\alpha}^T\tanh(e_f) \nonumber \\
& + m\tanh^T(e_x)\left(e_{\alpha} - \alpha_x\tanh(e_x) - \tanh(e_f)\right) + m\tanh(e_f)^T\left(-k_{\alpha}e_{\alpha} + \tanh(e_x) - \alpha_{f}\tanh(e_f)\right) 
\end{align}

Further solving \eqref{eq:stab_pos3} and taking bounds on $\dot{V}_2(t)$ yield 
\begin{align}\label{eq:stab_pos4}
\dot{V}_3 \leq & - mk_{\alpha}\|e_{\alpha}\|^2 - m\alpha_x\|\tanh(e_x)\|^2  - m\alpha_{f}\|\tanh(e_f)\|^2 + \|e_{\alpha}\|\|\Delta{f}\| + \|e_{\alpha}\|\|\chi\| + m\|e_{\alpha}\|\|e_{g_1}\|.
\end{align}


Solving equation \eqref{eq:stab_pos4} by substituting the dynamics of $\dot{e}_x(t)$ from \eqref{eq:poser1} in $\chi$, and using equations \eqref{eq:lemma3} and  \eqref{eq:lemma84} produces
\begin{align} \label{eq:stab_pos5}
\dot{V}_3 \leq & - mk_{\alpha}\|e_{\alpha}\|^2 - m\alpha_{f}\|\tanh(e_f)\|^2 + \|e_{\alpha}\|\left(m\alpha_{{x}}\|Sech^2(e_x)\| \|\tanh(e_f)\|
+ m\alpha_{f}\|\tanh(e_f))\|\right) - m\alpha_x\|\tanh(e_x)\|^2  \nonumber \\
& + \|e_{\alpha}\|\|Sech^2(e_x)\|\big(m\alpha_{{x}}\|e_{\alpha}\| + m\alpha_{{x}}^2 \|\tanh(e_x)\|\big) + m\|e_{\alpha}\|\alpha_6 + 2\alpha_1\|e_{\alpha}\|\big(\|e_{R_{dc}}\| + \|e_{R_d}\|\big),
\end{align}

which modifies to,
\begin{align}\label{eq:stab_pos6}
\dot{V}_3 \leq & - mk_{\alpha}\|e_{\alpha}\|^2 - m\alpha_x\|\tanh(e_x)\|^2  - m\alpha_{f}\|\tanh(e_f)\|^2 + m\alpha_{{x}}\|e_{\alpha}\|^2 + m\alpha_{{x}}^2 \|e_{\alpha}\|\|\tanh(e_x)\|  \nonumber \\
& + m\alpha_{{x}}\|e_{\alpha}\|\|\tanh(e_f)\| + m\alpha_{f}\|e_{\alpha}\|\|\tanh(e_f)\| + m\|e_{\alpha}\|\alpha_6 + 2\alpha_1\|e_{\alpha}\|\big(\|e_{R_{dc}}\| + \|e_{R_d}\|\big), \nonumber \\
\leq & - mk_{\alpha}\|e_{\alpha}\|^2 - m\alpha_x\|\tanh(e_x)\|^2  - m\alpha_{f}\|\tanh(e_f)\|^2 + m\alpha_{{x}}\|e_{\alpha}\|^2 + \frac{m\alpha_{{x}}^2}{2} \|e_{\alpha}\|^2 + \frac{m\alpha_{{x}}^2}{2}\|\tanh(e_x)\|^2 + \frac{m\alpha_{{x}}}{2}\|e_{\alpha}\|^2 \nonumber \\
&  + \frac{m\alpha_{{x}}}{2}\|\tanh(e_f)\|^2 + \frac{m\alpha_{f}}{2}\|e_{\alpha}\|^2 + \frac{m\alpha_{f}}{2}\|\tanh(e_f)\|^2 + \frac{m}{2}\|e_{\alpha}\|^2  + \frac{m}{2}\alpha_6^2 + \|e_{\alpha}\|^2 + {2{\alpha_1}^2}\|e_{R_{dc}}\|^2 + {2{\alpha_1}^2}\|e_{R_d}\|^2.
\end{align}

Further manipulation yield
\begin{align}\label{eq:stab_pos7}
\dot{V}_3 \leq & -\bigg(mk_{\alpha}-m\alpha_{{x}}-\frac{m\alpha_{{x}}^2}{2}-\frac{m\alpha_{{x}}}{2}-\frac{m\alpha_{f}}{2} -\frac{m}{2} - 1\bigg)\|e_{\alpha}\|^2 -\left(m\alpha_x - \frac{m\alpha_{{x}}^2}{2}\right)\|\tanh(e_x)\|^2 \nonumber \\
& - \left(m\alpha_{f} - \frac{m\alpha_{{x}}}{2} - \frac{m\alpha_{f}}{2}\right)\|\tanh(e_f)\|^2 + {2{\alpha_1}^2}\|e_{R_{dc}}\|^2 + {2{\alpha_1}^2}\|e_{R_d}\|^2 + \frac{m}{2}\alpha_6^2. 
\end{align}

\bmsubsubsection{Step 2} 
Now, let's consider another Lyapunov candidate $V_4(t)$ for auxiliary attitude errors $e_2(t)$ as 

\begin{align}\label{eq:Aux1}
V_4(t) = \Psi(R_d,R_c) + \frac{1}{2}e_{\Omega_{dc}}^Te_{\Omega_{dc}},
\end{align}

Taking the time derivative and using equations \eqref{eq:att3}, \eqref{eq:att4}, \eqref{eq:A4}, \eqref{eq:A5} and \eqref{eq:A6} yields 
\begin{align}\label{eq:Aux2}
\dot{V}_4(t) = & -\frac{1}{\gamma_3}e_{R_{dc}}^Te_{R_{dc}} - -\frac{1}{\gamma_4}e_{{\Omega}_{dc}}^Te_{{\Omega}_{dc}} - e_{R_{dc}}^TR_d^TR_c\Omega_c + \frac{1}{\gamma_3}e_{{\Omega}_{dc}}^TE(R_d,R_c)R_d^TR_c\Omega_c
\end{align}

Taking bounds on $\dot{V}_4(t)$ yields
\begin{align}\label{eq:Aux3}
\dot{V}_4(t) \leq &-\frac{1}{2\gamma_3}\|e_{R_{dc}}\|^2 -\frac{1}{2\gamma_4}\|e_{{\Omega}_{dc}}\|^2 + \left(\frac{\gamma_3 }{2}+ \frac{\gamma_4}{8\gamma_3^2}\right)\|\Omega_c\|^2
\end{align}

\bmsubsubsection{Step 3} 
To prove the stability of dynamics of error $e(t)$, let's consider the Lyapunov candidate as 
\begin{align}\label{eq:th211}
    V = & V_2(t) + V_3(t) + V_4(t)
\end{align}

Taking the derivative of $V(t)$ using \eqref{eq:Aux3}, \eqref{eq:stab_pos7} and \eqref{eq:th20} yields

\begin{align}\label{eq:th22}
\dot{V} \leq & -\left(mk_{\alpha}-m\alpha_{{x}}-\frac{m\alpha_{{x}}^2}{2}-\frac{m\alpha_{{x}}}{2}-\frac{m\alpha_{f}}{2} -\frac{m}{2} - 1\right)\|e_{\alpha}\|^2 -\left(m\alpha_x - \frac{m\alpha_{{x}}^2}{2}\right)\|\tanh(e_x)\|^2 \nonumber \\
& - \left(m\alpha_{f} - \frac{m\alpha_{{x}}}{2} - \frac{m\alpha_{f}}{2}\right)\|\tanh(e_f)\|^2 + \frac{m}{2}\alpha_6^2 -\left(\frac{1}{2\gamma_3}- {2{\alpha_1}^2}\right)\|e_{R_{dc}}\|^2 + \left(\frac{\gamma_3 }{2}+ \frac{\gamma_4}{8\gamma_3^2}\right)\|\Omega_c\|^2 \nonumber \\
& -\left(k_{\Omega}-\frac{c_1}{2}-\frac{c_1k_{\Omega}}{2\lambda_{min}(J)}\right)\|e_{\Omega_d}\|^2 -\frac{1}{2\gamma_4}\|e_{{\Omega}_{dc}}\|^2 - \left(\frac{c_1k_R}{\lambda_{max}(J)} - \frac{c_1k_{\Omega}}{2\lambda_{min}(J)} - {2{\alpha_1}^2}\right)\|e_{R_d}\|^2
\end{align}

From lemma \eqref{lemma3} and using the properties of hyperbolic functions, we can state that for some $\|e_{\alpha}\|< \bar{e}_\alpha$, the function $\rho_1()$ is locally Lipschitz. Thus, 
\begin{align}\label{eq:th23}
    \|\hat{\Omega}_c\|^2 \leq L_1^2\|e_\alpha\|^2 + \rho_{02}^2,
\end{align}
where $L_1$, a function of $\bar{e}_\alpha$ is the Lipschitz constant and $\rho_{02}$ is a positive constant. Substituting the \eqref{eq:th23} in \eqref{eq:th22} yields 
\begin{align}\label{eq:th24}
\dot{V} \leq & -\underbrace{\left[mk_{\alpha}-\frac{m\alpha_{{x}}(3\alpha_{{x}}+1)}{2}-\frac{m\alpha_{f}}{2}- \rho_{03}\right]}_{\lambda_1}\|e_{\alpha}\|^2 -\underbrace{\left[m\alpha_x - \frac{m\alpha_{{x}}^2}{2}\right]}_{\lambda_2}\|\tanh(e_x)\|^2 -\underbrace{\left[\frac{1}{2\gamma_3}- {2{\alpha_1}^2}\right]}_{\lambda_4}\|e_{R_{dc}}\|^2  - \underbrace{\left[\frac{1}{2\gamma_4}\right]}_{\lambda_5}\|e_{{\Omega}_{dc}}\|^2  \nonumber \\
&- \underbrace{\left[ \frac{m\alpha_{f}}{2}- \frac{m\alpha_{{x}}}{2}\right]}_{\lambda_3}\|\tanh(e_f)\|^2 -\underbrace{\left[k_{\Omega}-\frac{c_1}{2}-\frac{c_1k_{\Omega}}{2\lambda_{min}(J)}\right]}_{\lambda_6}\|e_{\Omega_d}\|^2 - \underbrace{\left[\frac{c_1k_R}{\lambda_{max}(J)} - \frac{c_1k_{\Omega}}{2\lambda_{min}(J)} - {2{\alpha_1}^2}\right]}_{\lambda_7}\|e_{R_d}\|^2 \nonumber \\
& + \underbrace{\left[\left(\frac{\gamma_3 }{2}+ \frac{\gamma_4}{8\gamma_3^2}\right)\rho_{02}^2 + \frac{m}{2}\alpha_6^2\right]}_{D},  
\end{align}
where $\rho_{03} = \frac{m}{2} + 1 + \frac{4\gamma_3^2 + \gamma_4}{8\gamma_3^2}L_1^2$. Equation \eqref{eq:th24} implies  
\begin{align}\label{eq:th25}
\dot{V} \leq & -\underbrace{min(\lambda_1,\lambda_2,\lambda_3,\lambda_4,\lambda_5,\lambda_6,\lambda_7)}_{\lambda_V}\|e\|^2 + D 
\end{align}

The relation \eqref{eq:th25} invokes uniformly ultimately bounded (UUB) stability since for $\|e(t)\|^2 > \frac{D}{\lambda_V}$, $\dot{V}(t)$ is negative-definite. However, note that \eqref{eq:th25} is derived based on \eqref{eq:th23}, which demands $\|e_\alpha\|\leq\bar{e}_\alpha$.  Hence, $\bar{e}_\alpha$ should be chosen such that $\bar{e}_{\alpha} \geq \sqrt{2\frac{max(V(0),\frac{D}{\lambda_V})}{m}}$, and accordingly the gain condition related to $\lambda_1$ has to be satisfied. This would ensure $\|e_\alpha(t)\|\leq\bar{e}_\alpha$, $\forall t\geq 0$, and hence, the UUB result would follow. 

\end{proof}

\begin{remark}
 It's worth noting that $D$ can be made arbitrarily small by choosing the filter gains $\gamma_1, \gamma_2, \gamma_{21}, \gamma_3, \gamma_4$ appropriately. This relaxes the unnecessary burden on the controller gains for such formulations. Moreover, if $k_{\alpha}$ is chosen such that the condition \eqref{Gain_K} is also satisfied then the desired attitude $R_c(t)$ always exists as $\|f_d(t)\| \geq \alpha_2 >0$ is guaranteed.
\end{remark}

\section{Conclusion}\label{Sec:conclusion}
In this paper, we introduced a trajectory-tracking controller for multi-rotor unmanned aerial vehicles (UAVs) with limited knowledge of the desired trajectory, restricted to position and heading information. Using the auxiliary filter dynamics, the unavailability of the higher derivative of the desired trajectory is tackled to design both attitude and translational motion controllers hierarchically. It is theoretically shown that the designed thrust is saturated and always positive, avoiding singularities in the desired attitude construction. The designed controller has proven to be ultimately bounded in the extended error space considering attitude and position error dynamics through rigorous theoretical analysis.

Future work would focus on experimental validation and real-world implementations to further validate the controller's robustness and adaptability. Additionally, exploring potential enhancements to accommodate a desired trajectory with unbounded derivatives (i.e. relaxing Assumption \ref{assump1}) and external disturbances could further broaden the scope of applications for multi-rotor UAVs.



\bmsection*{Acknowledgments}
This work is supported by TCS Research, India, under the TCS Research Scholar Program FP-140.



\bmsection*{Conflict of interest}
The authors declare no potential conflict of interests.

\bibliography{wileyNJD-AMA}



\appendix
\bmsection{Preliminaries on error formulation on $\mathrm{SO(3)}$}\label{secA1}
To design the attitude tracking controller on a nonlinear space $\mathrm{SO(3)}$, we start with constructing a positive definite configuration error function $\Psi$. We then choose the configuration error vector and a velocity error vector in the tangent space $T_R\mathrm{SO(3)}$ utilizing the derivatives of $\Psi$. Considering the carefully crafted error system, we design the controller following the Lyapunov direct method. Owing to the superior tracking performance for large initial attitude errors, we consider the following configuration error function \cite{Exponential_Geometric} $\Psi : \mathrm{SO(3)} \times \mathrm{SO(3)} \rightarrow \mathbb{R}$ for a given pair of desired attitude and angular velocity $R_1 \in \mathrm{SO(3)}, \Omega_1 \in \mathbb{R}^3 $, and the current attitude and angular velocity $R_2 \in \mathrm{SO(3)}, \Omega_2 \in \mathbb{R}^3$
\begin{align} \label{eq:A1}
    \Psi(R_2,R_1) = 2 - \sqrt{1 + tr[R_1^TR_2]}.
\end{align}

Based on the configuration error function, we define an attitude error vector $e_R(R_1,R_2): \mathrm{SO(3)} \times \mathrm{SO(3)} \rightarrow \mathbb{R}^3$ and an angular velocity error vector $e_{\Omega}: SO(3) \times \mathbb{R}^3 \times \mathrm{SO(3)} \times \mathbb{R}^3 \rightarrow \mathbb{R}^3$ as 

\begin{align}
& e_R(R_2,R_1) = \frac{1}{2\sqrt{1 + tr[R_1^TR_2]}}(R_1^TR_2 - R_2^TR_1)^\vee , \label{eq:A2}\\
& e_{\Omega}(R_2,\Omega_2,R_1,\Omega_1) = \Omega_2 - R_2^TR_1\Omega_1. \label{eq:A3}
\end{align}

Throughout this article, we will use the following well-established properties \cite{Exponential_Geometric} of these error variables in a sub-level set  $\mathcal{L}_2 = \{R_1,R_2 \in \mathrm{SO(3)} | \Psi(R_2,R_1) < 2 \}$ of $\Psi$
\begin{itemize} \label{item}
    \item $\Psi$ is positive definite about $R_2 = R_1$ and has only one critical point $R_2 = R_1$.
    \item $\|e_R\|^2 \leq \Psi \leq 2 \|e_R\|^2$.
    \item $\|e_R\| = sin\frac{\|\theta\|}{2}$, where $exp(\hat{\theta}) = R_1^TR_2$ i.e.  $\|e_R\| < 1$.
\end{itemize}

The time derivative  of the configuration function, attitude error and angular velocity  $\Psi(R_2,R_1),e_R(R_2,R_1),e_{\Omega}(R_2,\Omega_2,R_1,\Omega_1)$ can be given as
\begin{align}
&\frac{d}{dt}{\Psi}(R_2,R_1) = e_R(R_2,R_1)^Te_{\Omega}(R_2,\Omega_2,R_1,\Omega_1), \label{eq:A4} \\
&\frac{d}{dt}{e}_R(R_2,R_1) = E(R_2,R_1)e_{\Omega}(R_2,\Omega_2,R_1,\Omega_1), \label{eq:A5} \\
&\frac{d}{dt}J{e}_{\Omega}(R_2,\Omega_2,R_1,\Omega_1) = J\dot{\Omega}_2 + J(\hat{\Omega}_2R_2^TR_1\Omega_1 - R_2^TR_1\dot{\Omega}_1), \label{eq:A6} 
\end{align}

where $E(R_2,R_1) \in \mathbb{R}^{3 \times 3}$ is given as 
\begin{align}
 E(R_2,R_1) = &\frac{1}{2\sqrt{1 + tr[R_1^TR_2]}}\left(tr[R_2^TR_1]I - R_2^TR_1 + 2e_Re_R^T\right).  \label{eq:A7} 
\end{align}

Using the fact that $\|E(R_2,R_1)\| = \frac{1}{2}$ from  \cite{Exponential_Geometric}, we can state that 
\begin{align}
 \|\frac{d}{dt}{e}_R(R_2,R_1)\| \leq \frac{1}{2}\|e_{\Omega}(R_2,\Omega_2,R_1,\Omega_1)\|. \label{eq:atter4} 
\end{align}

\bmsection{Proof of Lemmas}

\bmsubsection{Lemma~{\rm\ref{lemma1}}}\label{secA2}
\begin{proof}
From equation \eqref{eq:pos5} we can write 
 \begin{align}\label{eq:lemma11}
     \|f_d\| \leq mg + m\|g_1\| + 2m\|\tanh(e_x)\| + mk_{\alpha} \|\tanh(e_f)\|.
 \end{align}

From the projection based differential estimate given in equation \eqref{eq:pos11}, we have $\|g_1\| \leq h_2$ and the fact that $\|\tanh(v)\| \leq 1 \ \forall v \in (- \infty, \infty)$, we can write

\begin{align}\label{eq:lemma12}
     \|f_d\| \leq mg + mh_2 + 2\sqrt{3}m + m\sqrt{3}k_{\alpha} = \alpha_1.
\end{align}

Also, from the third row of equation \eqref{eq:pos9} we get
\begin{align}\label{eq:lemma13}
    f_{d_z} = mg - m{g}_{1_z} + 2m\tanh(e_{z_1}) - mk_{\alpha}\tanh(e_{f_z})
\end{align}
Further calculating for the lower bound of the $f_{d_z}(t)$, we have   
\begin{align}\label{eq:lemma14}
    \|f_{d_z}\| \geq mg - mh_2 - 2m - mk_{\alpha}  = \alpha_2.
\end{align}
 $\alpha_2$ is strictly positive for the given choice of $k_{\alpha}$. 
\end{proof}

\bmsubsection{Lemma \ref{lemma2}}\label{secA3}
\begin{proof}
Substituting the expression of $f(t)$ in the expression of $\Delta{f}$, yields
\begin{align} \label{eq:B1}
    \Delta{f} = \|f_d\|R_ce_3 - \|f_d\|\sqrt{\frac{1 + e_3^TR_c^TRe_3}{2}}Re_3,
\end{align}

The upper bound of the $\Delta{f}$ can be calculated as  
\begin{align} \label{eq:B2}
    \|\Delta{f}\| \leq \bigg\| \|f_d\|R_ce_3 - \|f_d\|\sqrt{\frac{1 + e_3^TR_c^TRe_3}{2}}Re_3 \bigg\|.
\end{align}

Substituting the bounds of $\|f_d\|$ from lemma \ref{lemma1} to equation \eqref{eq:B2}
\begin{align} \label{eq:B3}
    \|\Delta{f}\| \leq \alpha_1\bigg\| R_ce_3 - \sqrt{\frac{1 + e_3^TR_c^TRe_3}{2}}Re_3 \bigg\|.
\end{align}

The quantity $e_3^TR_c^TRe_3$ represents the cosine of the angle between vectors $R_ce_3$ and $Re_3$. Let the angle be $\theta^{'}$, then from equation \eqref{eq:B3}, we have  

\begin{align} \label{eq:B4}
    \|\Delta{f}\| \leq \alpha_1\sqrt{1 - 2cos(\frac{\theta^{'}}{2})cos(\theta^{'}) + cos^2(\frac{\theta^{'}}{2})}.
\end{align}

Using Rodrigues' formula, let's consider the following
\begin{align}
    R_c^TR = exp(\hat{\theta}_1), \quad R_c^TR_d = exp(\hat{\theta}_2) \quad  R_d^TR = exp(\hat{\theta}_3), \label{eq:B5}
\end{align}

In the sublevel set $\mathcal{L}_2$ where $\theta^{'} \leq \|\theta_1\|$ the following holds 
\begin{align} \label{eq:B6}
   \sqrt{1 - 2cos(\frac{\theta^{'}}{2})cos(\theta^{'}) + cos^2(\frac{\theta^{'}}{2})} \leq 2 sin(\frac{\|\theta_1\|}{2}) = 2\|e_{R_c}\|.
\end{align}

Also, from \eqref{eq:B5} we have 
\begin{align}\label{eq:B7}
    R_c^TR = (R_c^TR_d)(R_d^TR), \quad exp(\hat{\theta}_1) = exp(\hat{\theta}_2)exp(\hat{\theta}_3), \quad exp(\hat{\theta}_1) = exp(\hat{\theta}_2 + \hat{\theta}_3). 
\end{align}

From \eqref{eq:B7}, the following relation holds
\begin{align}\label{eq:B8}
\|\theta_1\| \leq \|\theta_2\| + \|\theta_3\|.
\end{align}

Since the attitude error is equal to $sin\frac{\|\theta_i\|}{2}$ and in the sub-level set $\mathcal{L}_2$ i.e. $0\leq\|\theta_i\|<\pi, \forall i = 1,2,3$, its monotonically increasing. Hence,
\begin{align}\label{eq:B9}
\|e_{R_c}\| \leq \|e_{R_{dc}}\| + \|e_{R_d}\|.
\end{align}

From equation \eqref{eq:B9} and \eqref{eq:B6}, the bounds of $\|f_d\|$ can be given as 
\begin{align}\label{eq:B10}
\|\Delta{f}\| \leq 2\alpha_1(\|e_{R_{dc}}\| + \|e_{R_d}\|),
\end{align}
\end{proof}

\bmsubsection{Lemma \ref{lemma3}}\label{secA4}

\begin{proof}
From \eqref{eq:att1} we have 
\begin{align} \label{eq:C1}
    \|\hat{\Omega}_c\| \leq \|\dot{R}_c\|.
\end{align}
The time derivative of $R_c$ can be written as 
\begin{align} \label{eq:C2}
    \dot{R}_c = &[\dot{x}_{B_c},\dot{y}_{B_c},\dot{z}_{B_c}],
\end{align}
where 
\begin{align} \label{eq:C3}
\dot{z}_{B_c} = &\frac{\dot{f}_d}{\|f_d\|} - \frac{f_d(f_d^T\dot{f_d})}{\|f_d\|^3}, \nonumber \\
\dot{y}_{B_c} = &\frac{\dot{A}}{\|A\|} - \frac{(A^T\dot{A})}{\|A\|^3}A, \quad   A = x_{B_d} \times z_{B_c} \nonumber \\
\dot{x}_{B_c} = &\dot{y}_{B_c} \times {z}_{B_c} + {y}_{B_c} \times \dot{z}_{B_c}.
\end{align}

We individually calculate the upper bounds of column vectors of $\dot{R}_c$. The bounds on $\dot{z}_{B_c}$ can be given as 
\begin{align} \label{eq:C4}
\|\dot{z}_{B_c}\| \leq 2 \frac{\|\dot{f_d}\|}{\|f_d\|}
\end{align}

Similarly, the bounds on $\dot{y}_{B_c}$ can be given as 
\begin{align} \label{eq:C5}
\|\dot{y}_{B_c}\| \leq 2 \frac{\|\dot{A}\|}{\|A\|}
\end{align}

Since $x_{B_d} \nparallel z_{B_c}$, we have 
\begin{align} \label{eq:C6}
\|\dot{y}_{B_c}\| \leq \frac{2}{\delta_A}(\|\dot{x}_{B_d}\| + \|\dot{z}_{B_c}\|),
\end{align}
where ${\delta_A} \geq \|x_{B_d} \times z_{B_c}\|$. Using, equation \eqref{eq:C3} and properties of $\mathrm{SO(3)}$ we can obtain the upper bound of $\dot{x}_{B_c}$ as 
\begin{align} \label{eq:C7}
\|\dot{x}_{B_c}\| \leq  \|\dot{z}_{B_c}\| + \|\dot{y}_{B_c}\|
\end{align}

Now, the upper bound of $\|\hat{\Omega}_c\|$ can be written as 
\begin{align} \label{eq:C8}
\|\hat{\Omega}_c\| &\leq  \|\dot{z}_{B_c}\| + \|\dot{y}_{B_c}\| + \|\dot{x}_{B_c}\| \nonumber \\
&\leq  2\|\dot{z}_{B_c}\| + 2\|\dot{y}_{B_c}\| \nonumber \\
&\leq  (4 + \frac{4}{\delta_A})\frac{\|\dot{f_d}\|}{\|f_d\|} + \frac{4}{\delta_A}h_5. 
\end{align}

Let's evaluate the bound on the time derivatives of $f_d$. The time derivatives of $f_d$ can be given as 
\begin{align}\label{eq:C9}
    \dot{f}_d =  &-m\dot{g}_1 + 2mSech^2(e_x)\dot{e_x} - mk_{\alpha}Sech^2(e_f)\dot{e}_f.
\end{align}

Substituting the expressions from \eqref{eq:pos3}, \eqref{eq:pos11} and \eqref{eq:poser1} in the \eqref{eq:C9} will yield
\begin{align}\label{eq:C10}
     \dot{f}_d =  & -m\phi + m\Phi_{g_1} + 2mSech^2(e_x)\left(e_\alpha - \alpha_x\tanh(e_x) - \tanh(e_f)\right) + mk_{\alpha}^2e_{\alpha} - mk_{\alpha}\tanh(e_x) + mk_{\alpha} \alpha_{f}\tanh(e_f).
\end{align}

The quantity $\Phi_{g_1}$ can be bounded as  
\begin{align}\label{eq:C11}
 \|\Phi_{g_1}\| \leq &\frac{\|\nabla{f(g_1)}\|^2}{\|\nabla{f(g_1)}\|^2}\|\phi\| \|f(g_1)\|, \nonumber\\
 \leq & \|\phi\| \|f(g_1)\|,
\end{align}
and $\Phi_{g_1}$ appears when $f(g_1) >0 $ (i.e. $h_2 < \|g_1\| < h_2+ \epsilon_0$), then 
\begin{align}\label{eq:C12}
 \|\Phi_{g_1}\| \leq \|\phi\| \frac{\epsilon_0 + 2h_2}{h_2^2},
\end{align}
where $\phi$ can be written in the form of filter errors as 
\begin{align}\label{eq:C13}
 \phi = \frac{1}{\gamma_{21}}(e_{g_1} - e_{g_v} 
 - {e}_{\ddot{x}_d}).
\end{align}

Then from equation \eqref{eq:poser6}  and \eqref{eq:C10}, we can write 
\begin{align}\label{eq:C14}
 \|\dot{f}_d\| \leq & \frac{m}{\gamma_{21}}\left(1 + \frac{\epsilon_0 + 2h_2}{h_2^2}\right)(\|e_{g_1}\| + \|e_{g_v}\| 
 + \|{e}_{\ddot{x}_d}\|) + 2m\left(\|e_\alpha\| + \alpha_x \|\tanh(e_x)\| + \|\tanh(e_f)\|\right) \nonumber \\
     & + mk_{\alpha}^2\|e_{\alpha}\| + mk_{\alpha}\|\tanh(e_x)\| + mk_{\alpha}\alpha_{f}\|\tanh(e_f)\|.
\end{align}

Using lemma \ref{lemma1}, \ref{lemma8} and relation \eqref{eq:C14}, the equation \eqref{eq:C8} can be written as 
\begin{align} \label{eq:C15}
\|\hat{\Omega}_c\| \leq & \rho_1(\|e_{\alpha}\|,\|\tanh(e_x)\|,\|\tanh(e_f)\|) + \rho_{01},
\end{align}
where $\rho_1()$ is class $\kappa$-function and $\rho_{0_1}$ is a positive constant.
\end{proof}

\bmsubsection{Lemma \ref{lemma8}}\label{secA5}

\begin{proof}
For the error $e_{g_x}$, lets consider the following Lyapunov candidate
\begin{align} \label{eq:posf1}
    V_{g_x} = \frac{1}{2}e_{g_x}^Te_{g_x}.
\end{align}

Taking the time derivative of $V_{g_x}(t)$ and using \eqref{eq:poser7} will yield 
\begin{align} \label{eq:posf2}
    \dot{V}_{g_x}= -\frac{1}{\gamma_1}e_{g_x}^Te_{g_x} + e_{g_x}^T\ddot{x}_d.
\end{align}

This implies that 
\begin{align} \label{eq:posf3}
\|e_{g_x}\| = \sqrt{2{V}_{g_x}(t)} \leq \sqrt{\|e_{g_x}(0)\|^2exp\left(-\frac{1}{\gamma_1}t\right) + \gamma_1^2h_2^2}.  
\end{align}

Similarly, for error $e_{\ddot{x}_d}(t)$, consider the following Lyapunov candidate
\begin{align} \label{eq:posf4}
    V_{\ddot{x}_d} = \frac{1}{2}e_{\ddot{x}_d}^Te_{\ddot{x}_d}.
\end{align}

Taking the time derivative of $V_{\ddot{x}_d}(t)$ and using \eqref{eq:poser7} will yield 
\begin{align} \label{eq:posf5}
    \dot{V}_{\ddot{x}_d} = -\frac{1}{\gamma_1}e_{\ddot{x}_d}^Te_{\ddot{x}_d} + e_{\ddot{x}_d}^T\dddot{x}_d.
\end{align}

This implies that 
\begin{align} \label{eq:posf6}
\|e_{\ddot{x}_d}\| = \sqrt{2{V}_{\ddot{x}_d}(t)} \leq \sqrt{\|e_{\ddot{x}_d}(0)\|^2exp\left(-\frac{1}{\gamma_1}t\right) + \gamma_1^2h_3^2}.  
\end{align}

For error $e_{g_v}$, consider the following Lyapunov candidate
\begin{align} \label{eq:posf7}
    V_{g_v} = \frac{1}{2}e_{g_v}^Te_{g_v}.
\end{align}

Taking the time derivative of $V_{g_v}(t)$ and using \eqref{eq:poser7} will yield 
\begin{align} \label{eq:posf8}
    \dot{V}_{g_v} = -\frac{1}{\gamma_2}e_{g_v}^Te_{g_v} + \frac{1}{\gamma_1}e_{g_v}^Te_{\ddot{x}_d}.
\end{align}

This implies that 
\begin{align} \label{eq:posf9}
\|e_{g_v}\| = \sqrt{2{V}_{g_v}(t)} \leq \sqrt{\|e_{g_v}(0)\|^2exp\left(-\frac{1}{\gamma_2}t\right) + \frac{\gamma_2^2}{\gamma_1^2}\|e_{\ddot{x}_d}\|^2}.  
\end{align}

Finally, we will analyze the dynamical behavior of the error $e_{g_1}(t)$ by choosing the following Lyapunov candidate
\begin{align} \label{eq:posf10}
    V_{g_1} = \frac{1}{2}e_{g_1}^Te_{g_1}.
\end{align}

Taking the time derivative of $V_{g_1}(t)$ and using \eqref{eq:poser7} will yield 
\begin{align} \label{eq:posf11}
    \dot{V}_{g_1} = &-\frac{1}{\gamma_{21}}e_{g_1}^Te_{g_1} + \frac{1}{\gamma_{21}}e_{g_1}^Te_{g_v} + \frac{1}{\gamma_{21}}e_{g_1}^T{e}_{\ddot{x}_d} + e_{g_1}^T\Phi_{g_1} + e_{g_1}^T\dddot{x}_{d}.
\end{align}

Using the fact that $e_{g_1}^T\Phi_{g_1} \leq 0$, the bounds on $\dot{V}_{g_1}(t)$ can be given as 
\begin{align} \label{eq:posf12}
    \dot{V}_{g_1} \leq -\frac{1}{2\gamma_{21}}\|e_{g_1}\|^2 + \frac{2}{\gamma_{21}}\|e_{g_v}\|^2 + \frac{2}{\gamma_{21}}\|{e}_{\ddot{x}_d}\|^2 + \gamma_{21}h_3^2.
\end{align}

Solving \eqref{eq:posf12} for ${V}_{g_1}(t)$ using equations \eqref{eq:lemma82}, \eqref{eq:lemma83} 
\begin{align} \label{eq:posf13}
    {V}_{g_1}(t) \leq {V}_{g_1}(0) exp(-\frac{1}{\gamma_{21}}t) + {4}\alpha_4^2 + {4}\alpha_5^2 + \gamma_{21}^2h_3^2.
\end{align}

This implies that 
\begin{align} \label{eq:posf14}
    \|e_{g_1}(t)\| = \sqrt{2{V}_{g_1}(t)} \leq \sqrt{\|e_{g_1}(0)\|^2 exp(-\frac{1}{\gamma_{21}}t) + \alpha_7},
\end{align}
where $\alpha_7 = {8}(\alpha_4^2 +\alpha_5^2) + 2\gamma_{21}^2h_3^2$.
\end{proof}


\end{document}